\let\emptyset\varnothing
\DeclareMathAlphabet{\mathpzc}{OT1}{pzc}{m}{it}
\DeclarePairedDelimiterX{\bkt}[1]{(}{)}{ #1}
\DeclarePairedDelimiterX{\sbkt}[1]{[}{]}{ #1}
\DeclarePairedDelimiterX{\lbkt}[1]{\{}{\}}{ #1}
\theoremstyle{definition}
\newcolumntype{x}[1]{>{\raggedright\arraybackslash}p{#1}}
\newmdtheoremenv[linewidth=0.95pt, skipabove=6pt, skipbelow=6pt]{lemma}{Lemma}
\newmdtheoremenv[linewidth=0.95pt, skipabove=6pt, skipbelow=6pt]{theorem}{Theorem}
\newmdtheoremenv[linewidth=0.95pt, skipabove=6pt, skipbelow=6pt]{protocol}{Protocol}
\newmdtheoremenv[linewidth=0.95pt, skipabove=6pt, skipbelow=6pt]{motion_protocol}{Motion Protocol}
\newmdtheoremenv[linewidth=0.95pt, skipabove=6pt, skipbelow=6pt]{policy}{Policy}
\newmdtheoremenv[linewidth=0.95pt, skipabove=6pt, skipbelow=6pt]{constraint}{Constraint}
\newmdtheoremenv[linewidth=0.95pt, skipabove=6pt, skipbelow=6pt]{rules}{Rules}
\begin{document}

\title{Safe Intersection Management for Mixed Transportation Systems with Human-Driven and Autonomous Vehicles}
\author{
	\IEEEauthorblockN{Xi Liu}
	\IEEEauthorblockA{\small 
		Texas A\&M University\\
		College Station, TX 77843, USA\\
		xiliu@tamu.edu
	}\and
	\IEEEauthorblockN{Ping-Chun Hsieh}
	\IEEEauthorblockA{\small 
		Texas A\&M University\\
		College Station, TX 77843, USA\\
		pingchun.hsieh@tamu.edu
	}\and
	\IEEEauthorblockN{P. R. Kumar}
	\IEEEauthorblockA{\small 
		Texas A\&M University\\
		College Station, TX 77843, USA\\
		prk.tamu@gmail.com
	}
}

\maketitle
\begin{singlespacing}
\begin{abstract}
Most recent studies on establishing intersection safety focus on the situation where all vehicles are fully autonomous. However, currently most vehicles are human-driven and so we will need to transition through regimes featuring a varying proportion of human-driven vehicles ranging from 100\% to 0\% before realizing such a fully autonomous future -- if ever. We will therefore need to address the safety of hybrid systems featuring an arbitrary mixture of human-driven and autonomous vehicles. In fact recent incidents involving autonomous vehicles have already highlighted the need to study the safety of autonomous vehicles co-existing with human-driven vehicles. Motivated by this we address the design of provably safe intersection management for mixed traffic consisting of a mix of human-driven vehicles (HVs) as well as autonomous vehicles (AVs).

To analyze such mixed traffic, we model HVs as nearsighted and with relatively loose constraints, permitting worst-case behavior while AVs are considered as capable of following much tighter constraints. HVs are allowed freedom to change their speed at any time while AVs are only allowed to change their speed at the beginning of a time slot through a Model Predictive Controller (MPC). AVs are assumed to possess a shorter response time and stronger braking capability than HVs in collision avoidance. Moreover, AVs obtain the permissions of passing through the intersection through vehicle-to-infrastructure (V2I) communication, while HVs achieve the same objective by following traffic lights.

Taking the above differences into consideration, we propose a provably safe intersection management for mixed traffic comprised of an MPC-based protocol for AVs, a permission assignment policy for AVs along with a coordination protocol for traffic lights. In order to respect the distinctiveness of HVs, the proposed protocol ensures that the traffic lights as well as the semantic meanings of their colors are consistent with current practice. A formal proof of safety of the system under the proposed combined protocol is provided. 

\end{abstract}
\begin{IEEEkeywords}
Intelligent transportation system, autonomous vehicle, intelligent intersection, mixed traffic.
\end{IEEEkeywords}
\end{singlespacing}
\section{Introduction}
\label{section:introduction}

According to the European Union community road accident database CARE, intersection related fatalities account for more than 20\% of the fatalities in the European Union during 2001-2010 \cite{c7_broughton2013traffic}. Similarly in the US, 40\% of the crashes and 21.5\% of the traffic fatalities are intersection related \cite{c8_nhtsa}. Autonomous vehicles (AVs) have been regarded as being able to mitigate this issue. Their technology has been advancing thanks to many related projects, e.g., the ITS program in the US, the EUREKA Prometheus Project in European Union, the ITS initiative program in Japan and the Self-Driving Car Project by Google. In the 2007 DARPA Urban Challenge, six out of 32 AVs completed the race, indicating feasibility of fully autonomous driving in urban environment \cite{c1_darpa2007}. Besides, supportive standards and laws have been adopted for AVs clearing their way to enter the market. To support communication, the IEEE Wireless Access for Vehicle Environment (WAVE) standard and IEEE 802.11p based Directed Short Range Communication (DSRC) standards for vehicle-to-vehicle (V2V) communication \cite{c2_ieee2013ieee} and vehicle-to-infrastructure (V2I) communications \cite{c3_dsrc2009dedicated} have been developed. Testing of AVs has been legally permitted on public roads in California, Michigan, Florida, Nevada, Arizona, North Dakota, Tennessee, and the District of Columbia \cite{c4_ncsl}.

For the reasons discussed above, intersection safety with autonomous vehicles has received growing research attention. However, most recent studies focus on establishing intersection safety in the context that \emph{all} vehicles passing through the intersection are fully autonomous, i.e., all vehicles have computation capability, are able to negotiate with one another, or receive instructions from a centralized controller. Provable safety is attained in \cite{c17_kowshik2011provable} through assigning AVs non-overlapping time slots for accessing the intersection to avoid collisions. This assignment is accomplished by assuming that all vehicles have GPS, V2I communication, in-vehicle sensing and computation. More strictly, in \cite{c18_carlino2013auction,bashiri2017platoon} only one AV or platooning is allowed to enter the intersection per time. The order to enter is obtained via V2I communication. Similarly, safety is improved in \cite{c13_kamal2015vehicle} by preventing pairs of conflicting AVs from approaching their cross-collision point at the same time. This prevention mechanism requires that all vehicles are equipped with V2I communication and a computation module. In \cite{qian2015decentralized,c14_makarem2013model}, a higher requirement is imposed on vehicles: all of them are able to access the system state through wireless communication while solving Model Predictive Control (MPC) based optimization problems. In \cite{c15_de2013autonomous}, intersection safety is attained by AVs sequentially solve optimization problems with the solution of previous AVs being passed to succeeding AVs as additional constraints through V2V communication.

The fraction of AVs in present traffic is negligibly small and it is predicted that by 2030, only 50\% of vehicles will be autonomous \cite{c9_de2014network}. Therefore there may be a long transition period, during which a mix of HVs as well as AVs share the same intersection. So far, however, there has been little rigorous attention to safe intersection management in mixed traffic (a review is given in Section \ref{sec:related_works}). Moreover, it is very difficult to extend studies on homogeneous traffic to mixed traffic as AVs and HVs are significantly different. First, HVs are operated by human drivers and thus ought to have the freedom to change speed at any time. In comparison, AVs are usually controlled by controllers that are only able to adjust control inputs at the beginning of each time slot. Second, human drivers should not be expected to follow complex commands, while AVs are considered as capable of following much tighter constraints. Third, with respect to braking to avoid collision, human drivers may be unable to response as quickly as AVs. Finally, AVs learn about each other's intentions through V2V communication and receive commands from intersections through V2I communication, while human drivers achieve the same goal by reading signaling lights on other vehicle, such as turn signals, and intersection traffic lights.

Motivated by the issues presented above, we address the design of provably safe intersection management for mixed traffic consisting of a mix of HVs as well as AVs. This paper continues our work on single-lane and multi-lane traffic \cite{c10_liu2015towards,c11_liu2016towards,hsieh2017throughput}, and is apparently the first to address provably safe intersection management for mixed traffic under above considerations. 

\noindent \textbf{Contributions} The contributions of our work can be summarized as follows.
\begin{itemize}
    \item A novel intersection management is proposed for mixed traffic consisting of a mix of HVs and AVs. The management maximally respect the habits of human drivers in current practice.
    \item HVs and AVs are represented by different models that capture their differences in control freedom, command complexity, braking response capability, and communication capability.
    \item The proposed management takes into account the difference between AVs and HVs.
    \item A formal proof for the safety of the proposed management is provided.
\end{itemize}

The rest of this paper is organized as follows. In Section \ref{sec:related_works}, we provide a brief summary of related work in intersection management for mixed traffic. In Section \ref{sec:problem_modeling}, we present our models of intersections, HVs and AVs. Intersection management for mixed traffic in combination with a coordination protocol for traffic lights is investigated in Section \ref{sec:methods}, and subsequently, a formal proof of safety is provided. Concluding remarks follow in Section \ref{sec:conclusion}.

\section{Related Works}\label{sec:related_works}

Very few studies have investigated intersection management for mixed traffic consisting both AVs and HVs. In \cite{c20_dresner2007sharing}, mixed traffic is proposed to be managed through redeploying existing infrastructure-traffic lights. As separate traffic light is assumed to be responsible for each lane, each lane is successively granted with a green light during a small portion of time while HVs in other lanes are stopped by red lights. A formal proof on system safety is missing. In \cite{bento2013intelligent}, an HV is viewed as a virtual AV. Every time an HV arrives, the intersection reserves an exclusive time for it, and is informed of its right-of-way by the human-dedicated traffic lights. This design faces difficulties since it imposes additional constraints on traffic lights and lane infrastructure. In \cite{c19_qian2014priority}, AVs are controlled to respect priority-preserving order provided by intersection, while for HVs, the intersection is responsible to create a virtual request, process it and inform them through traffic lights. However, when there are low-priority AVs in front, HVs are blocked out of the intersection in spite the green traffic lights leading to a mismatch with human driver expectations. The study also does not model the differences between AVs and HVs except for their communication capabilities. In \cite{altche2017algorithm,ahn2018safety} intersection management is investigated for ``semi-autonomous" vehicles, i.e., vehicles that support control-input override and V2I communication. Safety is guaranteed by allowing intersection to override control inputs from human drivers when they would result in an unsafe or blocked situation, which presents difficulties in current practice as most present HVs does not support such functions.

\section{Problem Formulation}\label{sec:problem_modeling}

In the context of mixed traffic, intersection modeling and vehicle modeling are expected to fully respect the differences between AVs and HVs, while avoiding unnecessary changes to current infrastructure. The differences between AVs and HVs can be summarized as below aspects.

\begin{enumerate}
    \item \emph{Control freedom}: HVs have the freedom to change a control input at any time, while AVs are only able to adjust control input at periodic discrete times.
    
    \item \emph{Command complexity}: HVs should not be expected to follow complex commands, whereas AVs are viewed as able to follow tighter constraints and complex protocols.

    \item \emph{Braking response capability}: HVs cannot brake as hard as AVs. HVs are unable to response as quickly as AVs. 
    
    \item \emph{Communication capability}: AVs can communicate with each through V2V communication and with the intersection bidirectionally by V2I communication, while HVs can only read signal lights on vehicles and intersections ``unidirectionally'', where 
``unidirectionally'' here means that HVs cannot directly send any acknowledgement or talk to intersections.
\end{enumerate}

\subsection{Intersection Modeling}

We consider a general intersection scenario as shown in Fig. \ref{fig:intersection}. We assume vehicles follow legal paths represented by solid curves. Paths are numbered, and intersect at several ``conflict points''. Around each conflict point, a small area accounting for the length of a vehicle is defined and represented by a solid circle referred to as a ``collision area''. One target of safe intersection management is to prevent two vehicles in one system from entering the same collision area simultaneously. Our intersection model allows for the case that a lane supports more than one traffic flow, e.g., ``straight'' and ``right-turn''. Consequently, a vehicle with permission may fail to enter the intersection if it is blocked by a vehicle ahead that is still waiting a permission. Although Fig. \ref{fig:intersection} shows three lanes for each in-road, the proposed scheme extends to more general scenarios, where the first $l_{1}$ lanes are ``left-turn only'', the next $l_{2}$ lanes are ``left-turn or straight'', the next $l_{3}$ lanes are ``straight only'', the next $l_{4}$ lanes are ``straight or right-turn'', and the last $l_{5}$ lanes are ``right only''. 

\vspace{-3mm}
\begin{figure}[thpb]
\centering
\includegraphics[width=.48\textwidth, height=.45\textwidth]{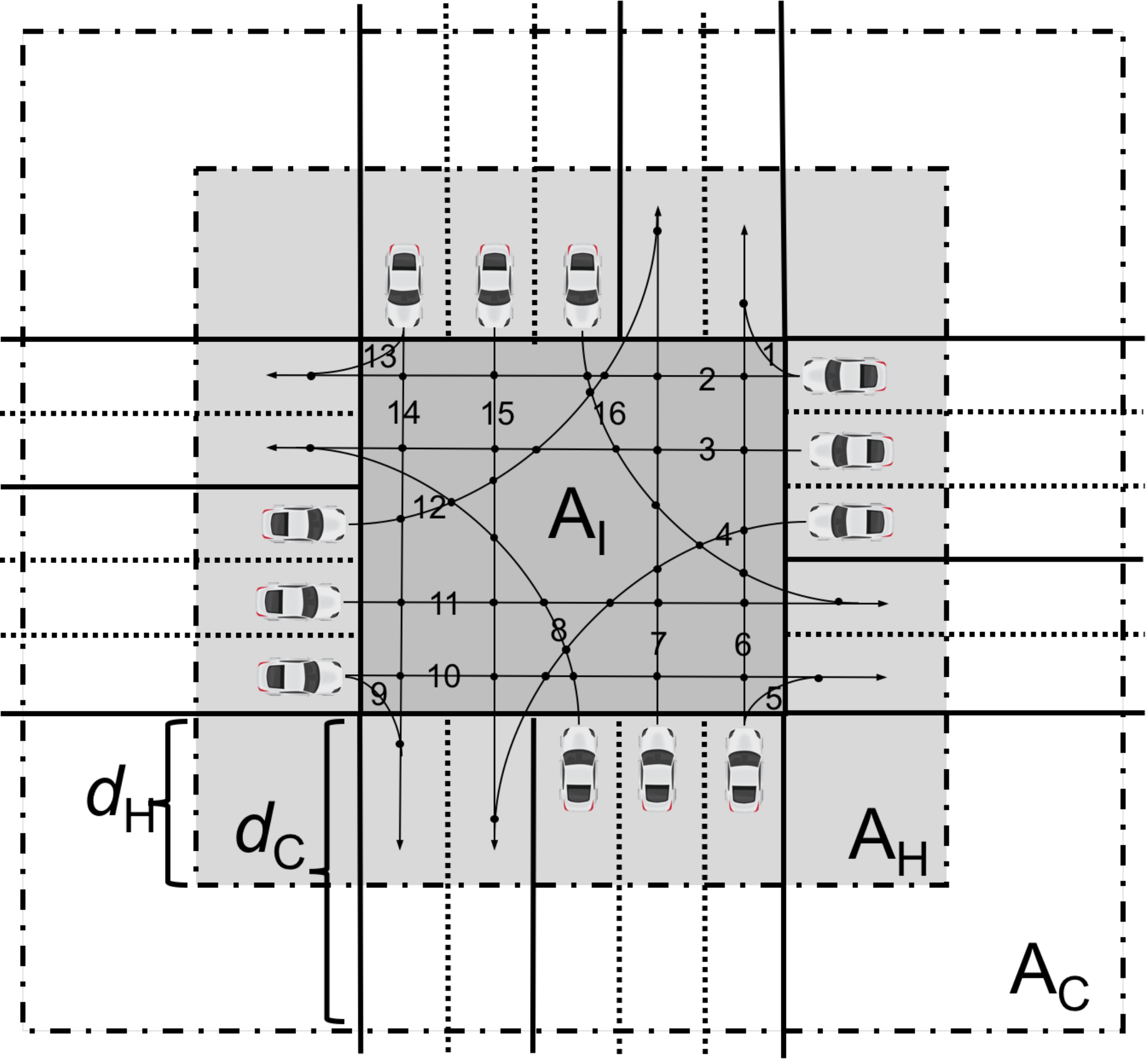}
\caption{General intersection scenario with crossing, merging, and non-conflicting paths}
\label{fig:intersection}
\end{figure}

In Fig. \ref{fig:intersection}, we denote the area inside the intersection by $A_{I}$, entering which will be referred to as ``entering the intersection''. Supposing the range of reliable V2I communication to be $d_{C}$, we denote by $A_{C}$ the area that is within a distance $d_{C}$ from $A_{I}$. AVs are expected to stay connected with the intersection inside area $A_{C}\cup A_{I}$. Unlike AVs, HVs plan their motion through the intersection through traffic lights. Motivated by this, we suppose that HVs within a distance $d_{H}$ notice the color of the traffic light before entering $A_{I}$. We denote by $A_{H}$ the area within a distance $d_{H}$ of $A_{I}$. Since $d_{H}$ is line-of-sight, we assume $d_{H}<d_{C}$ in Fig. \ref{fig:intersection}. Given these differences between HVs an AVs, our goal is to design an intersection management scheme that avoids collision among vehicles at any time and place in the intersection.

\subsection{Vehicle Modeling}
\begin{figure}[thpb]
\centering
\includegraphics[width=.5\textwidth, height=.41\textwidth]{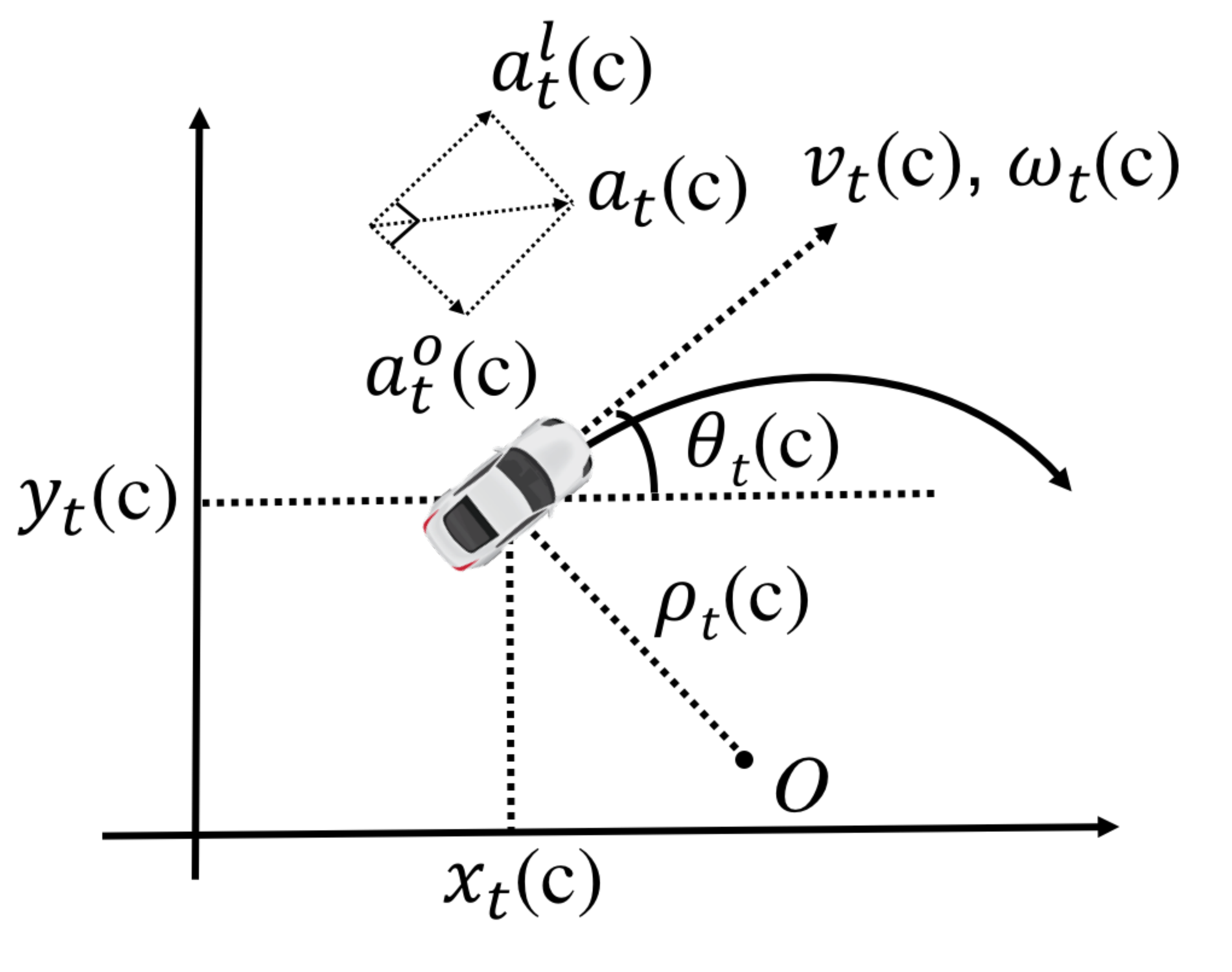}
\caption{Illustration of the unicycle kinematic model for vehicle $c$. The solid curve depicts the trajectory of the vehicle. }
\label{fig:Kinematic}
\end{figure}

We model the motion of vehicles by a unicycle kinematic model. As shown in Fig. \ref{fig:Kinematic}, at time $t$, the state of vehicle $c$ is $ \mathbf{x}_{t}(c):=(x_{t}(c), y_{t}(c), \theta_{t}(c))^{T}$, capturing its position and orientation. Longitudinal deceleration is along the forward direction and is denoted by $a_{t}^{\ell}(c)$. Centripetal deceleration points to the turning center $O$ and is denoted by $a_{t}^{o}$, resulting in an equivalent acceleration $a_{t}(c) = \sqrt{(a_{t}^{\ell}(c))^{2}+(a_{t}^{o}(c))^{2}}$. $\rho_{t}(c)$ denotes the predefined turning radius of $c$ at position $p_{t}(c)$, as defined by the predefined lane markers. The input vector is $\mathbf{u}_{t}:=(v_{t}(c),\omega_{t}(c))$ denoting velocity control and steering control. Where there is no scope for confusion we omit the vehicle identifier $c$. 

The physical constraints on the kinematic model are: (1) $v_{t}\in[0,v_{max}]$. (2) $\theta_{t}\in[\theta_{min},\theta_{max}]$. (3) $|\rho_{t}|\in[\rho_{min},+\infty)$. (4) $|\omega_{t}|\in[0, v_{max}/\rho_{min}]$; (5) $\varDelta{v_{t}}\in[a_{min}h,a_{max}h]$. Here $v_{max}\geq 0$ denotes the speed limit at the intersection, while $\theta_{min}<0$ and $\theta_{max}>0$ correspond to the minimum and maximum turning according to its angels. The quantity $a_{min}<0$ denotes the most rapid deceleration, where $a_{min}=-\sqrt{(a^{l}_{min})^{2}+(a^{o}_{min})^{2}}$, $a^{l}_{min}<0$ is the maximum achievable lateral deceleration (i.e., most rapid brake) and $a^{o}_{min}<0$ is maximum achievable centripetal deceleration by vehicle tires, $a_{max}>0$ is the maximum achievable acceleration, i.e., maximum throttle, and the minimum turning radius along trajectories.

Control signals for AVs are determined at the beginning of each time slot $[t,t+h)$ and maintained constant during it. In contract, control inputs for HVs are allowed to change at any time. The kinematic equation of an vehicle is
\begin{align}
\mathbf{x}_{t+h}:=f(\mathbf{x}_{t}, \mathbf{u}_{t}).
\end{align}
Then, for an AV when $\omega_{t}\neq0$, we express $f$ by
\begin{align*}
x_{t+h}:&=2\dfrac{v_{t}}{\omega_{t}}\sin(\dfrac{1}{2}\omega_{t}h)\cos(\theta_{t}+\dfrac{1}{2}\omega_{t}h)+x_{t}
\\
y_{t+h}:&=2\dfrac{v_{t}}{\omega_{t}}\sin(\dfrac{1}{2}\omega_{t}h)\sin(\theta_{t}+\dfrac{1}{2}\omega_{t}h)+y_{t}
\\
\theta_{t+h}:&=\omega_{t}h+\theta_{t},
\end{align*}
and for $\omega_{t}=0$, it is
\begin{align*}
x_{t+h}:&=v_{t}h\cos(\theta_{t})+x_{t}
\\
y_{t+h}:&=v_{t}h\sin(\theta_{t})+y_{t}
\\
\theta_{t+h}:&=\theta_{t}.
\end{align*}
For an HV:
\begin{align*}
x_{t+h}:&=\int_{0}^{h}v(t+\tau)\cos(\theta_{t+\tau})d\tau+x_{t}
\\
y_{t+h}:&=\int_{0}^{h}v(t+\tau)\sin(\theta_{t+\tau})d\tau+y_{t}
\\
\theta_{t+h}:&=\int_{0}^{h}\omega(t+\tau)d\tau+\theta_{t}.
\end{align*}

We allow different maximum achievable decelerations for AVs and HVs, i.e., different lower bounds in the physical constraints on acceleration, and use a superscript ``hv'' and ``av'' to differentiate them. Specifically, we allow $a^{hv}_{min} > a^{av}_{min}$. We also suppose that HVs may need longer response time $T^{hv}_{r}$ than AVs. We also suppose that AVs have the knowledge on the type of nearby vehicles e.g., AV or HV, as they can confirm the identity of each AV through V2V communication. HVs, in contrast, are given the freedom to treat each nearby vehicle as an HV. As a result each AV is expected to be responsible for not colliding with vehicles, and, at the same time, behaving in a manner similar to an HV when it is interacting with an HV. By imposing such concern on AVs, we ensure that HVs do not need to differentiate between the types of vehicles that are following it. However, an AV must ensure that if it is followed by an HV, then it should give enough response time to the following HV and also take into account that it can at most brake with $a^{hv}_{min}$ to avoid collision. The main contribution of this paper including a series of our previous studies \cite{c10_liu2015towards,c11_liu2016towards,hsieh2017throughput} is to show how to integrate such a loose model of HVs and a tight model of AVs safely in one system.

\section{Methods}\label{sec:methods}

We consider the intersection management problem for mixed traffic and establish its safety in this section. It is challenging to design a safe intersection management scheme for mixed traffic, due to the differences in control freedom, command complexity and collision-avoidance capabilities between AVs and HVs. Moreover, AVs are expected to follow instructions from the intersection while HVs are assumed to follow signal lights for motion planning. Also, AVs are able to behave like HVs and comply to signal lights as well.

\begin{figure}[thpb]
\centering
\includegraphics[width=.5\textwidth, height=.33\textwidth]{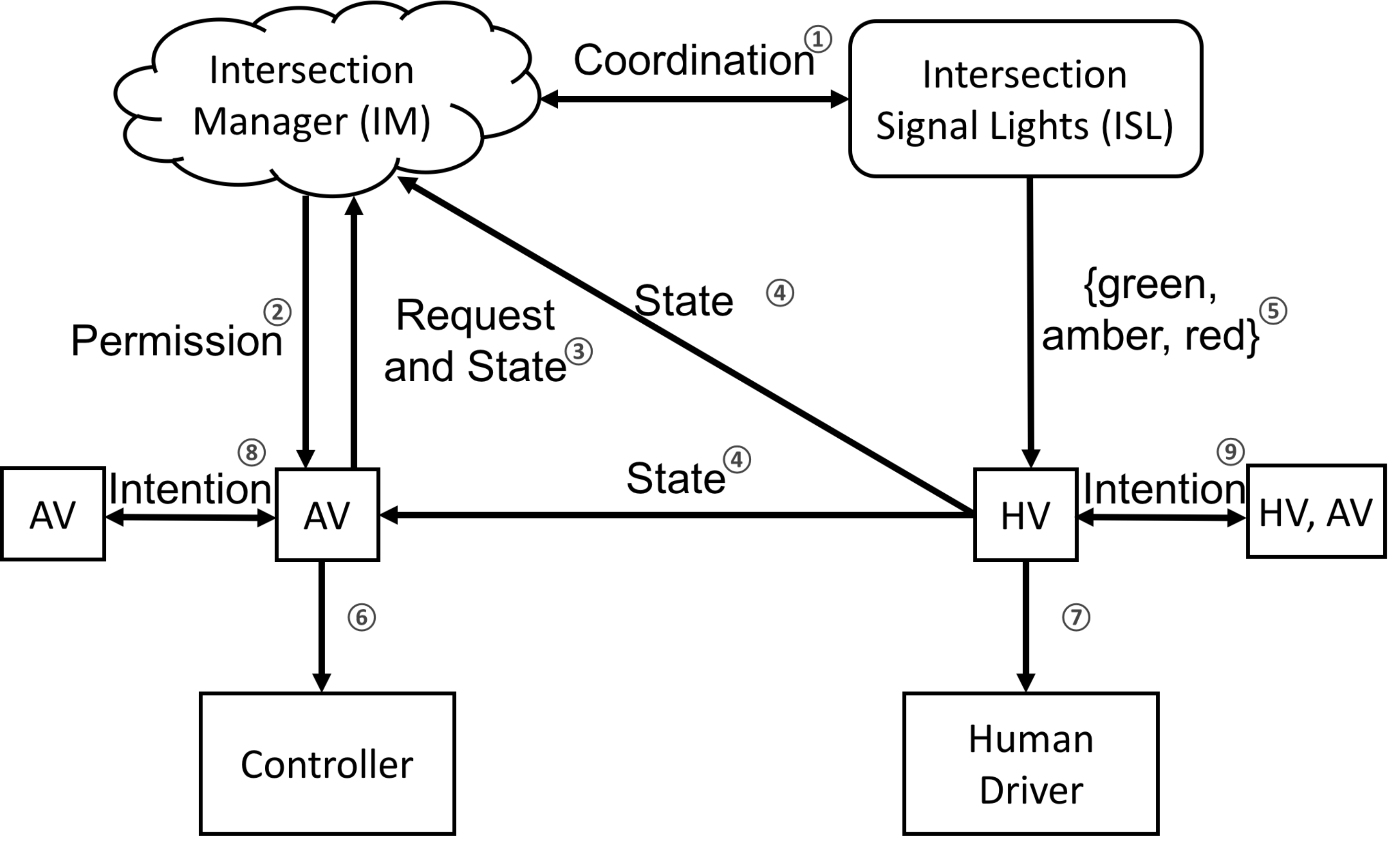}
\caption{Architecture of intersection management for mixed traffic.}
\label{fig:architecture}
\end{figure}

Fig.\ref{fig:architecture} illustrates the proposed architecture for intersection management in mixed traffic. The Intersection Manager (IM) is responsible for managing AVs such that they will not enter the collision area at the same time as other vehicles. It collects AV states and HV states through V2I communication and road-side sensors, after they are within $A_{C}$. Based on that information, IM determines whether or not to grant pass-through permission to an AV according to its ``permission assignment policy''. Similarly the Intersection Signal Light Controller (ISL) controls the signal light color according to a ``signal operation policy''. The color of the signal light is observed visually after an HV enters $A_{H}$.

\begin{table}[thpb]
\centering
\caption{Illustration of Marks in Fig.\ref{fig:architecture}}
{\small
\begin{tabular}{|p{0.6cm}|p{5cm}|}
\hline
Mark & Illustration
\\ \hline
\textcircled{1} & IM-ISL Coordination Protocol
\\ \hline
\textcircled{2} & Permission Assignment Policy
\\  \hline
\textcircled{3} & V2I Communication
\\ \hline
\textcircled{4} & Sensors
\\ \hline
\textcircled{5} & Signal Operation Policy
\\ \hline
\textcircled{6} & Motion Protocol
\\ \hline
\textcircled{7} & Simple Rules
\\ \hline
\textcircled{8} & V2V Communication, Sensors
\\ \hline
\textcircled{9} & Turning Lights
\\ \hline
\end{tabular}
}
\label{table:architecture}
\end{table}

Since AVs and HVs follow different management policies, but they interleave with each other along their paths, it is critical to have coordination between IM and ISL. Otherwise, an HV approaching the intersection may observe a green signal light, but be blocked by AVs ahead of it whose requests were denied. Also, the coordination needs to respect the uncertainty of HVs' understanding in signal lights as the communication between HVs and the intersection is unidirectional, i.e., HVs cannot directly send acknowledgement to or directly receive certain permissions from intersections. For instance, all HVs may observe a green signal light as they enter $A_{H}$, even though the ISL may only plan to permit a subset of them. When the signal light changes, the remaining HVs may fail to brake before $A_{I}$ and thus become ``unplanned but permitted'' vehicles. 

After receiving permission from the IM, AVs control their movement through their motion protocol. The target of the adjustment is to safely pass through the intersection. Meanwhile, AVs need to stay safe with respect to vehicles nearby by taking into account their states and intentions. The states and intention of nearby AVs are obtained through V2V communication and on-board sensors, while information of nearby HVs are learnt through on-board sensors. In contrast, HVs being operated by human drivers are not expected to follow as complex commands and tighter constraints as AVs, and obtain intentions of other vehicles through their turn signals. These features are illustrated in Fig. \ref{fig:architecture} through the marks explained in Tab \ref{table:architecture}.

\subsection{Safe Signal Operation for HVs}

To respect the capability differences between HVs and AVs, we propose a flexible traffic light operation approach, and, correspondingly, very simple rules for HVs. The rules are easy to follow and conform to the habits of human drivers in present traffic. Denote by $\Gamma$ the set of all paths in Fig. \ref{fig:intersection}, and by $\gamma_{i}$ the path with index $i$. We suppose that except the right-turn paths $\Gamma_{r} = \{\gamma_{1}, \gamma_{5}, \gamma_{9}, \gamma_{13}\}$, all other paths are managed by their corresponding signal lights. A signal light may be allowed to manage more than one path. For instance, $\gamma_{2}$ and $\gamma_{3}$ can be managed by one signal light. Denote by $\ell_{\gamma}(t)$ the status of signal light for path $\gamma$ at time $t$, with $\ell_{\gamma}(t) \in \{g, a, r\}$, where $g$ represents green, $a$ amber, and $r$ red. We use $\gamma_{i} \cap \gamma_{j}$ to indicate the collision area between paths $\gamma_{i}$ and $\gamma_{j}$. $\gamma_{i} \cap \gamma_{j} = \emptyset$ indicates $\gamma_{i}$ and $\gamma_{j}$ are collision-free, e.g., $\gamma_{2}$ and $\gamma_{10}$. Denote by $p_{\gamma}$ the cross-point of path $\gamma$ and $A_{I}$. If two vehicles $c_{i}$ and $c_{j}$ are on the same path $\gamma$ at time $t$, then the distance from $c_{i}$ to $c_{j}$ along is denoted by $d_{\gamma}(p_{t}(c_{i}), p_{t}(c_{j}))$. $d_{\gamma}(p_{t}(c_{i}), p_{t}(c_{j}))>0$ if $c_{i}$ is following $c_{j}$; otherwise $d_{\gamma}(p_{t}(c_{i}), p_{t}(c_{j}))<0$. 

Now we specify signal operation policy.
\begin{policy}
\textbf{Signal Operation Policy}

\noindent Intersection signal lights are operated in such a way that for any $\gamma$, $\ell_{\gamma}(t)$ cyclically changes from $g$ to $a$ to $r$ and back to $g$ as $t$ increases. Meanwhile, it respects two constraints:
\begin{enumerate}
    \item For any $\gamma_{i}$, if $\ell_{\gamma_{i}}(t) \in \{g, a\}$ then $\ell_{\gamma_{j}}(t)=r$ for all $\gamma_{j} \in \Gamma_{i}$, where $\Gamma_{i} = \{\gamma \in \Gamma: \gamma_{i} \cap \gamma \neq \emptyset \}$.That is, when a path is green or amber, all its conflicting paths are red.
    \item  for any $\gamma \in \Gamma$, suppose at some time $t_{0}$, there exists $\ell_{\gamma}(t_{0})=a$, then $\ell_{\gamma}(t) \neq r$ for 
    $t>t_{0}$ as long as some HV $c$ along $\gamma$ cannot stop before $A_{I}$, i.e. $d_{\gamma}(p_{t}(c), p_{\gamma})<s_{hv}(v_{t}(c), 0)$, where
    \begin{align*}
       s_{hv}(v_{t}(c), 0) = \dfrac{v_{t}(c)^2}{-2a^{hv}_{min}}+v_{max}T^{hv}_{r}+s_{min}.
    \end{align*}
     That is, a path is held at amber for sufficiently long time to allow HVs that cannot stop to pass through the intersection.
\end{enumerate}
\end{policy}
It is easy to find a feasible solution that satisfies constraint (1). In fact, most intersection signal lights in operation presently satisfy constraint (1). Constraint (2) is also easy to satisfy as the intersection knows HV states through road-side sensors.


Given the signal light state in the intersection, HVs are expected to follow simple rules within $A_{H}$. For simplicity, we assume that $d_{H}\geq s_{hv}(v_{max},0)$, i.e. if an HV observes a red light when it enters $A_{H}$ with velocity $v_{max}$, it is still able to stop before entering $A_{I}$. Denote by $\mathcal{C}^{hv}$ the set of all HVs and $\mathcal{C}^{av}$ the set of all AVs. Given vehicle $c_{i}$, we denote by $\gamma(c_{i})$ the path along with $c_{i}$ is moving. The set of vehicles that are ahead of $c_{i}$ on $\gamma(c_{i})$ at time $t$ can be expressed as :
\begin{align}
\mathcal{C}_{i}(t) :=\{c\in \mathcal{C}^{hv}\cup \mathcal{C}^{av}:\gamma(c)=\gamma(c_{i}) \nonumber \\
\land d_{\gamma(c)}(p_{t}(c_{i}),p_{t}(c))>0\}.
\end{align}
To establish system-wide safety within the intersection, we assume that HVs are able to follow their lead vehicles on the same path while maintaining a safe separation distance. This assumption is reasonable as otherwise, safety cannot be established even in an intersection with only HVs. We will show that it is enough for safety to follow the nearest lead vehicle along its path with safe separation distance.

Now we specify the rules for HVs within $A_{H}$.
\begin{rules}
\noindent\textbf{Rules for HVs}

\noindent A HV $c \in \mathcal{C}^{hv}$ follows the following rules after entering $A_{H}$: 
\begin{enumerate}
    \item $c$ follows the nearest lead vehicle $c^{l}$ maintaining a safe separation distance, $d_{\gamma(c)}(p_{t}(c), p_{t}(c^{l})) \geq s_{hv}(v_{t}(c),v_{t}(c^{l}))$ at any $t$, where
    \begin{align*}
        & s_{hv}(v_{t}(c),v_{t}(c^{l})) :=   \\
        &\bm{1}(v_{t}(c)>v_{t}(c^{l}))\dfrac{(v_{t}(c))^2-(v_{t}(c^{l}))^2}{-2a^{hv}_{min}}+\\
        & v_{max}T^{hv}_{r} + s_{min},
    \end{align*}
    and $s_{min}\geq 0$ is a constant to take care of vehicle size. The nearest lead vehicle is determined by
    \begin{align*}
        c^{l}:=&\arg\underset{c'\in 
        \mathcal{C}_{c}(t)}{\min}\,\{d_{\gamma(c)}(p_{t}(c),p_{t}(c')):\\
        &d_{\gamma(c)}(p_{t}(c),p_{t}(c'))>0\}.
    \end{align*}

\item If $\gamma(c)\notin \Gamma_{r}$, then $c$ prepares to enter $A_{I}$ if the following condition holds
\begin{align*}
    &\big(\ell_{\gamma(c)}(t)=g\big) \lor \big(\ell_{\gamma(c)}(t)=a  \\
    & \land d_{\gamma(c)}(p_{t}(c), p_{\gamma})<s_{hv}(v_{t}(c),0)\big),
\end{align*}
$c$ prepares to stop before $A_{I}$ if below condition holds
\begin{align*}
    & (\ell_{\gamma(c)}(t)=r) \lor \big(\ell_{\gamma(c)}(t)=a \nonumber \\
    & \land d_{\gamma(c)}(p_{t}(c), p_{\gamma})\geq s_{hv}(v_{t}(c),0)\big).
\end{align*}

\item If $\gamma(c)\in \Gamma_{r}$, then $c$ prepares to enter $A_{I}$ at time $t$ if 1) is satisfied and the following condition holds: denote by $\gamma$ the only conflicted path of $\gamma(c)$
\begin{align*}
    d_{\gamma}(p_{t}(c^{r}), p_{\gamma})\geq s_{hv}(v_{t}(c^{r})-a^{hv}_{min}h, 0).
\end{align*}
where $c^{r}$ is.
\begin{align*}
c^{r}:=&\arg{\underset{c'\in \mathcal{C}^{av}\cup \mathcal{C}^{hv}}\min}\,\{d_{\gamma}(p_{t}(c'),p_{\gamma}):\\
&d_{\gamma}(p_{t}(c'),p_{\gamma})>0\}.
\end{align*}
\end{enumerate}
\end{rules}
The first rule for HVs ensures that once the nearest lead vehicle brakes rapidly, an HV is able to stop before hitting the lead vehicle. To respect the difference in communication capabilities, the first rule does not differentiate between whether the lead vehicle is an HV or AV. The second rule governs how an HV behaves based on signal light status. We highlight here that when the signal light turns to amber, an HV will still plan to pass through the intersection if its current speed and position do not allow a safe stop before entering $A_{I}$; otherwise, it prepares to stop before entering $A_{I}$. The third rule governs how an HV turns right safely, when there is no signal light. Since a right-turn path has a collision area with only one another path, the third rule specifically checks other vehicles on that path. In view the fact that AVs may suddenly change speed, an adjustment of observed speed is added to the safe separation distance.

\subsection{Safe Motion Protocol for AVs}
As mentioned in Section \ref{sec:methods}, AVs are expected to send path requests to the IM after entering $A_{C}$. After receiving a decision from IM, AVs adjust their movement through their motion protocol. In this subsection, we first present the design of the AV controller and then propose a motion protocol for AVs to stay safe. We propose a Model Predictive Control (MPC) based motion planner for AVs in combination with safety constraints. Given $v_{t-h}$, the velocity of an AV in time slot $[t-h, t)$, the target of the MPC based controller is to decide control inputs for the time slot $[t, t+h)$. Then motion protocol adjusts the controller by imposing different safety constraint. Usually the safety constraints that maintains a safe separation distance from the nearest lead vehicle is imposed. If the motion protocol indicates that the AV should stop before entering $A_{I}$, then additional safety constraints that guide the AV to stop at $p_{\gamma}$ will be imposed in exchange. We will show that under the proposed architecture and protocols, it is enough for each AV to follow one possibly virtual ``reference object'' for safety. The ``reference object'' can be the nearest vehicle or the cross-point of the $\gamma(c)$ and $A_{I}$. Denote by $v^{ro}_{t}$ and $p^{ro}_{t}$ the sampled velocity and position of the reference object at time $t$. Motion protocol achieves the goal of adjusting controller majorly through changing the reference object. 

We employ the following MPC to determine the movement of an AV within $A_{C}$.

\leftline{\textbf{MPC for AVs within $A_{C}$}}
\vspace{-6mm}
\begin{align}
\underset{\mathbf{u}(0:N-1)}{\min}\,\hspace{2mm} &J(\mathbf{x_{t}}, \mathbf{x_{t}^{f}}, {\mathbf{u}(0:N-1)})\\
s.t.\hspace{2mm}
&\mathbf{x}_{t+(k+1)h}=f(\mathbf{x}_{t+kh}, \mathbf{u}_{t+kh})\nonumber\\ &d_{\gamma}(p_{t+(k+1)h},p_{t+(k+1)h}^{ro})\geqslant{s^{*}_{*}(v_{t+kh},v_{t+kh}^{ro})}\nonumber\\
&d_{\gamma}(p_{t+kh}^{ro},p_{t+(k+1)h}^{ro})=(v_{t+(k-1)h}^{ro}+a^{*}_{min}h)h\nonumber\\
&d_{\gamma}(p_{t+kh},p_{t+(k+1)h})=v_{t+kh}h\nonumber\\
&\omega_{t+kh}=v_{t+kh}/\rho_{t+kh}\nonumber\\
&v_{t+kh}\in[{\underline{v}_{t+kh}}, \overline{v}_{t+kh}]\nonumber\\
&\underline{v}_{t+kh}:=\max\{0, v_{t+(k-1)h}+a^{*}_{\min}h\} \nonumber \\
&\overline{v}_{t+kh}:=\min\{v_{max},v_{t+(k-1)h}+a_{max}h\} \nonumber \\
&\theta_{min}\leq\theta_{t+(k+1)h}\leq\theta_{max}\nonumber,
\end{align}
for all $k\in\{0,...,N-1\}$, where $\mathbf{u}(0:N-1)=\{\mathbf{u}_{t},...,\mathbf{u}_{t+(N-1)h}\}$, $N$ is the length of the time horizon, and $\mathbf{x}^{f}_{t+kh}$ is the target state set for time $t+kh$. The cost function $J$ is allowed to be arbitrary, since our guarantee of safety depends only on the existence of a feasible solution to the MPC, and not on the objective function or its value. The separation distance $s_{*}^{*}(\cdot,\cdot)$ for AVs is also designed to ensure that, even if its lead vehicle brakes rapidly, it can stop before hitting its lead vehicle.

The motion protocol needs to ensure the safety of AVs in the multiple scenarios shown in Fig. \ref{fig:mixed_traffic_scenarios}. Since we do not require that an HV differentiates between the type of its lead vehicle and do not require that it takes into account any vehicles that are following it, thus we enforce that an AV to behave like an HV (not decelerate more than $a^{hv}_{min}$) if it is followed by an HV. Such an AV is referred to as a ``virtual HV" (VHV) and is treated as HV in $s^{*}_{*}(\cdot,\cdot)$. Thus, as shown in Fig. \ref{fig:mixed_traffic_scenarios}(a,b), $a^{*}_{min}=a^{hv}_{min}$ when an AV is followed by an HV/VHV, and, as shown in Fig. \ref{fig:mixed_traffic_scenarios}(c,d) $a^{*}_{min}=a^{av}_{min}$ when an AV is followed by another AV. At the same time, however, the AV may be following yet another AV that can decelerate at $a^{av}_{min}$. Therefore, $s^{*}_{*}(\cdot,\cdot)$ the lower bound on distance between an AV and its lead vehicle depends on the type of vehicles the AV is following (specified by the superscript of $s^{*}_{*}(\cdot,\cdot)$), as well as the type of vehicle it is followed by (specified by the subscript of $s^{*}_{*}(\cdot,\cdot)$). Covering all these cases, all possible $s^{*}_{*}(\cdot,\cdot)$ are given below:
\begin{align}
s^{av}_{av}(v_{t-h},v_{t-h}^{ro})&:=\bm{1}(v_{t-h}>v_{t-h}^{ro})\big\{\dfrac{(v_{t-h})^2-(v_{t-h}^{ro})^{2}}{-2a^{av}_{min}}\nonumber\\
&+(v_{t-h}-v_{t-h}^{ro})h-\dfrac{1}{2}a^{av}_{min}h^{2}\big\}+s_{min},\\
s_{av}^{hv}(v_{t-h}, v^{ro}_{t-h})&:=\bm{1}(v_{t-h}>v_{t-h}^{ro}+a^{hv}_{min}h/2)\big\{\nonumber\\
&\dfrac{(v_{t-h}-v_{t-h}^{ro}-a^{hv}_{min}h/2)^{2}}{-2(a^{av}_{min}-a^{hv}_{min})}+(v_{t-h}-v_{t-h}^{ro})h
\nonumber \\
&-\dfrac{1}{2}a^{av}_{min}h^{2}-\dfrac{1}{2}a^{hv}_{min}h^{2}\big\}+s_{min},\\
s_{hv}^{hv}(v_{t-h}, v^{ro}_{t-h})&:=\bm{1}(v_{t-h}>v_{t-h}^{ro}+a^{hv}_{min}h/2)\big\{\nonumber \\
&\dfrac{v_{t-h}^{2}-(v_{t-h}^{ro}+a^{hv}_{min}h/2)^{2}}{-2a^{hv}_{min}}+(v_{t-h}-v_{t-h}^{ro})h
\nonumber \\
&-a^{hv}_{min}h^{2}\big\}+v_{max}T_{r}^{hv}+s_{min},\\
s_{hv}^{av}(v_{t-h}, v^{ro}_{t-h}&):=\bm{1}(v_{t-h}>\sqrt{\dfrac{a^{hv}_{min}}{a^{av}_{min}}}v_{t-h}^{ro})\big\{\dfrac{v_{t-h}^{2}}{-2a^{hv}_{min}}-\nonumber \\
&\dfrac{(v_{t-h}^{ro})^{2}}{-2a^{av}_{min}}+(v_{t-h}-v_{t-h}^{ro})h-\dfrac{1}{2}a^{hv}_{min}h^{2}\big\} \nonumber \\
&+v_{max}T_{r}^{hv}+s_{min}.
\end{align}
We observe that different from \cite{c6_kim2014mpc}, whose safe separation distance grows polynomially with velocity $d_{s}(v_{t-h})=v_{t-h}^2/(-2a^{av}_{min})+v_{t-h}h-a^{av}_{min}h^{2}/2>v_{t-h}^2/-2a^{av}_{min}$, our separation distance $s^{av}_{av}(v_{t-h},v^{ro}_{t-h})$ can be as small as $s_{min}$, independent of velocity. This property makes the MPC amenable to platooning \cite{c10_liu2015towards}. Actually $d_{s}(v_{t-h})$ can be viewed as a special case of $s^{av}_{av}(v_{t-h},v^{ro}_{t-h})$ when $v_{t-h}^{ro}=0$.

\begin{figure}[thpb]
\centering
\includegraphics[width=.5\textwidth, height=.39\textwidth]{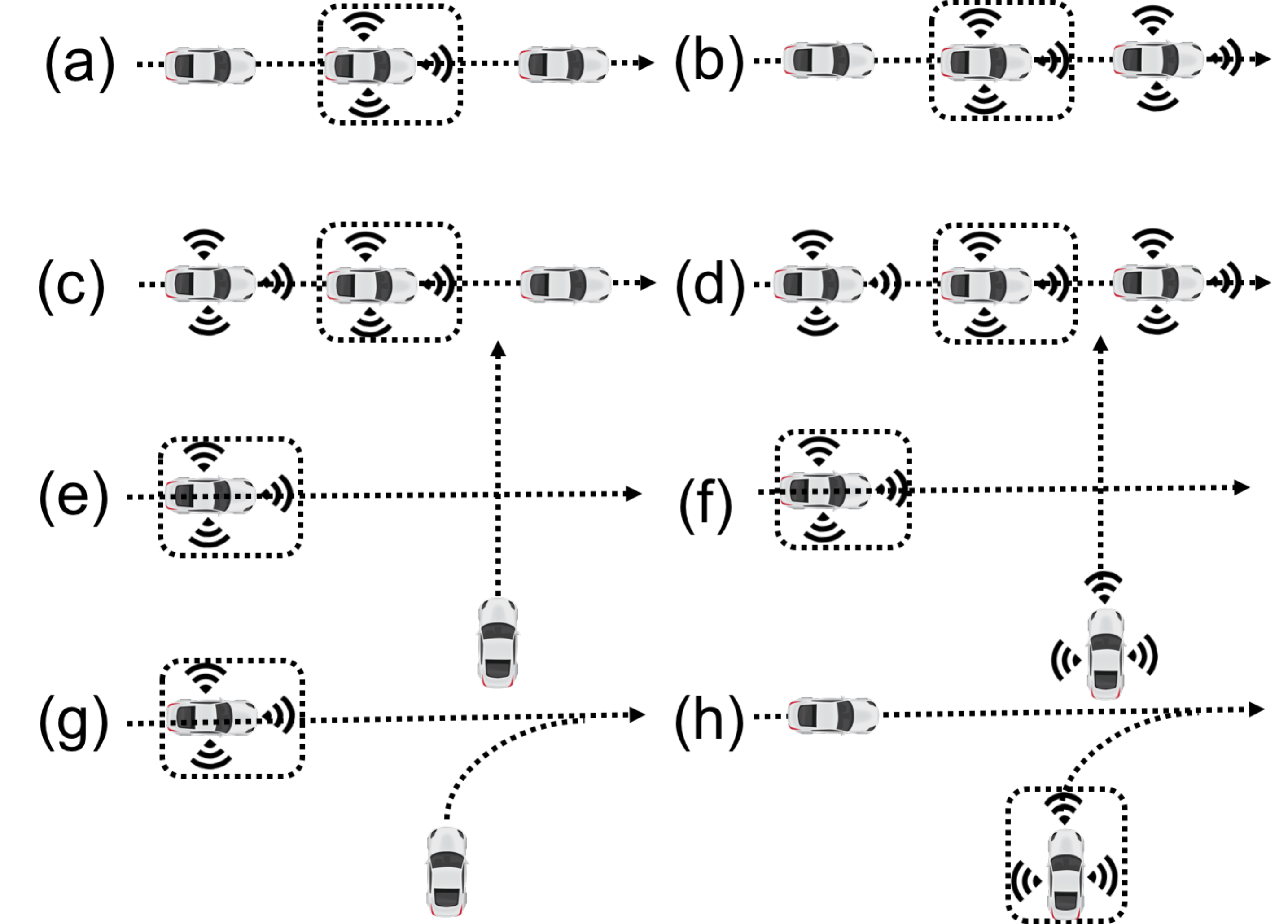}
\caption{Multiple traffic scenarios. An AVis represented by an icon with a WiFi symbol. The vehicle with dashed square is ``own vehicle" whose viewpoint and behavior are discussed in the test.}
\label{fig:mixed_traffic_scenarios}
\end{figure}

We assume reasonably that $d_{C}\geq v_{max}^{2}/-2a^{hv}_{min}+v_{max}h-a^{hv}_{min}h^{2}/2+v_{max}T^{hv}_{r}+s_{min}$, i.e., if an AV that is followed by an HV/VHV enters $A_{C}$ with velocity $v_{max}$, it still can stop before entering $A_{I}$. Otherwise, it is impossible for those AVs. We observe that when an AV is following an HV, $v_{t-h}^{ro}$ is the momentary velocity of its lead HV at $t-h$, as HVs enjoy the freedom to change their control input ay any time during $[t-h, t)$. Consequently in the $t$-round MPC for that AV, the velocity of lead HV is taken as $v^{ro}_{t-h}+a^{hv}_{min}h/2$ in equations (5-6), i.e., the estimated average velocity for time $[t-h,t)$. We also observe that although the result from the $t$-th epoch MPC (the MPC whose initial time is $t$) is a plan for a longer time horizon, we only implement it for the first time slot $[t,t+h)$. 

\begin{figure}[thpb]
\centering
\includegraphics[width=.5\textwidth, height=.24\textwidth]{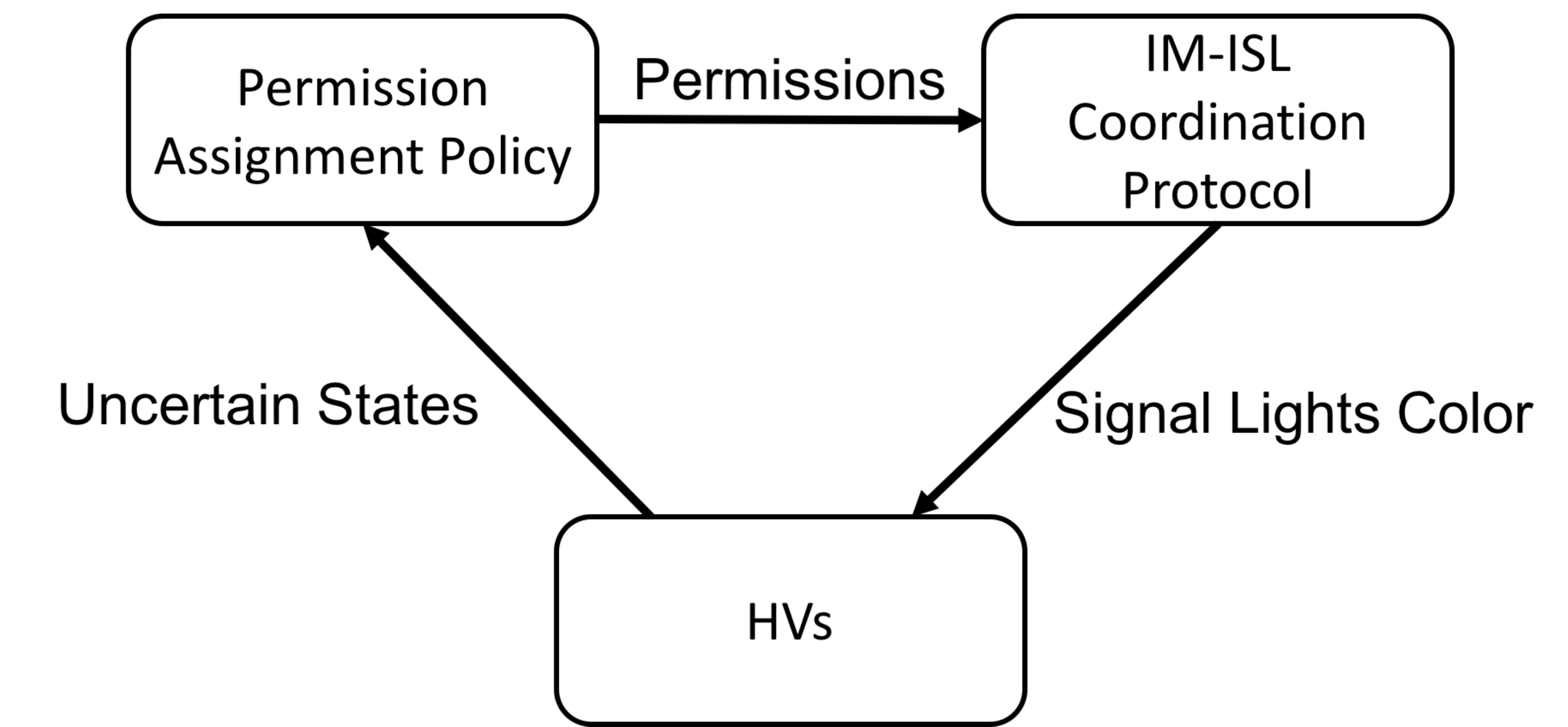}
\caption{Information flow between permission assignment and signal color determination.}
\label{fig:decision_flow}
\end{figure}

\vspace{10mm}

The motion protocol for AVs is given below.
\begin{protocol} \textbf{Motion Protocol for AVs}

\noindent An AV $c\in \mathcal{C}^{av}$ follows the following motion protocol to adjust its MPC based controller:
\begin{enumerate}
    \item If the request of $c$ is permitted by the IM, then $c$ follows the nearest lead vehicle $c^{ro}$ with safe separation distance $s^{*}_{*}(v_{t-h}(c),v_{t-h}(c^{ro}))$ in equations (4-7) depending on which scenario $c$ is in. $c^{ro}$ is determined through
    \begin{align*}
        c^{ro}:=&\arg\underset{c'\in 
        \mathcal{C}_{c}(t)}{\min}\,\{d_{\gamma(c)}(p_{t}(c),p_{t}(c')):\\
        &d_{\gamma(c)}(p_{t}(c),p_{t}(c'))>0\}.
    \end{align*}
    
    \item If the request of $c$ has not been permitted by the IM, then $c$ maintains a safe separation distance from its nearest lead vehicle while preparing to stop before entering $A_{I}$. Correspondingly, we add one additional constraints to Equation (3):
    \begin{align*}
        d_{\gamma}(p_{t+(k+1)h},p_{\gamma})\geqslant{s^{*}_{*}(v_{t+kh},0)}.
    \end{align*}
\end{enumerate}
\end{protocol}

\subsection{Safe Intersection Management for a Mix of AVs and HVs}

Lacking communication capability and ability to follow complex commands are the major issues for HVs. Information can only be communicated to HVs through traffic lights, and no information or acknowledgements can be obtained from them.  This greatly increases the uncertainty surrounding HVs in traffic management. In contrast, as shown in Fig. \ref{fig:architecture}, IM taken on their request to pass through the intersection, can collect acknowledgements from AVs. Handling the uncertainty vis-vis HVs impairs the throughput of intersections, since when an intersection is not sure about an HV's plan, it needs to make a worst estimate to guarantee safety. The objective our traffic management scheme is to guarantee safety of mixed traffic while simultaneously achieving greater throughput as possible.

As discussed in Section \ref{sec:problem_modeling}, AVs send their state and a request to pass-through after they enter $A_{C}$. The received vehicle state for AVs consists of their position information $p_{t}$, and control inputs for the last time slot $\mathbf{u}_{t-h}$. At the same time, an HV's state is obtained through roadside sensors. Paths to follow for HVs are obtained through their turn signaling. Given this totality of information, the IM permits path requests of AVs by following the permission assignment policy. Considering that processing requests from AVs takes time, we suppose that the IM processes the requests received during $[t-h, t)$ at $t$. Permissions of those request are sent to corresponding AVs after $t$. Denote by $\mathcal{P}^{av}_{t}$ the set of AVs that the IM grants permissions to before time $t$. Denote by $\mathcal{P}^{hv}_{t}$ the set of HVs IM plans to grant permission before time $t$. This means that the IM-ISL coordination protocol will enforce the corresponding signal light to be green until some permitted HV enters $A_{I}$.

The process above will introduce uncertainty as other HVs, on the same lane may think they are permitted to enter as well and may not be able to safely stop before $A_{I}$ after the light changes. The IM therefore needs to take that uncertainty into account. Denote by $\mathcal{C}^{un}_{t}$ an estimate of those ``uncertain'' HVs; then:
\begin{align}
    &\mathcal{C}^{un}_{t} := \{c\in\mathcal{C}^{hv}: \ell_{\gamma(c)}(t-h)=g  \ \land d_{\gamma(c)}(p_{t}(c), p_{\gamma(c)})\nonumber \\
    &<s_{hv}(\max(v_{t-h}(c)+a_{max}h,v_{max}), 0)\}.
\end{align}
$\mathcal{C}^{un}_{t}$ contains those HVs that may be unable to stop before $A_{I}$ in the next time slot $[t,t+h)$ once their corresponding signal light turns to amber. In the definition of $\mathcal{C}^{un}_{t}$, we conservatively use $\max(v_{t-h}(c)+a_{max}h,v_{max})$ since the estimated present velocity as that is the maximum achievable velocity for the HV at time $t$. Although $\mathcal{C}^{un}_{t}$ overestimates the set of uncertain HVs and may impair throughput, it is necessary to guarantee safety. Denote by $\mathcal{C}^{e}_{t}$ the set of vehicles that exited $A_{I}$ before $t$; then we have:
\begin{align}
    \mathcal{C}^{e}_{t} :=&\{c\in\mathcal{C}: d_{\gamma(c)}(p_{t}(c),p_{\gamma(c)})<0 \ \land \nonumber\\
    &p_{t}(c)\notin A_{I}\}.
\end{align}

Now we describe the permission assignment policy below.
\begin{policy} \textbf{Permission Assignment Policy}

\noindent Suppose that AVs send requests for paths to the IM once they enter $A_{C}$. At time $t$, given $\mathcal{P}^{av}_{t-h}$,  $\mathcal{P}^{hv}_{t-h}$, $\mathcal{C}^{un}_{t}$, and additional path requests received during $[t-h, t)$, the IM follows the following steps to determine $\mathcal{P}^{av}_{t}$,  $\mathcal{P}^{hv}_{t}$.

\begin{enumerate}
    \item Initialize by still permitting already-permitted vehicles and remove those that already exited:
    \begin{align*}
        \mathcal{P}^{av}_{t} = \mathcal{P}^{av}_{t-h} \setminus \mathcal{C}^{e}_{t},\\
        \mathcal{P}^{hv}_{t} = \mathcal{P}^{hv}_{t-h} \setminus \mathcal{C}^{e}_{t}.
    \end{align*}

    \item Sort pending requests (including new requests received during $[t-h, t)$ and un-permitted requests until $t$) based on the present distance from vehicles to the intersection: the closer to intersection, the lower the requests rank.
    
    \item Process the requests from low rank to high rank: for an AV $c$ with a request for path $\gamma(c)$, if $\gamma(c) \cap \gamma(c') = \emptyset$ for any $c' \in \mathcal{P}^{av}_{t} \cup \mathcal{P}^{hv}_{t} \cup \mathcal{C}^{un}_{t}$, then:
    \begin{align*}
        \mathcal{P}^{av}_{t} = \mathcal{P}^{av}_{t} \cup c.
    \end{align*}
    
    \item For each HV $c\in \mathcal{C}^{uv}$, $\mathcal{P}^{hv}_{t} = \mathcal{P}^{hv}_{t} \cup c$ as long as one of the following conditions holds: (\emph{a}) $\gamma(c')\cap \gamma(c)=\emptyset$ for any $c' \in \mathcal{P}^{av}_{t} \cup \mathcal{P}^{hv}_{t} \cup \mathcal{C}^{un}_{t}$; (\emph{b}) there exists  $c^{f}\in \mathcal{P}^{av}_{t} \cup \mathcal{C}^{un}_{t}$, where $c^{f}$ satisfies:
    \begin{align*}
        &\gamma(c) =\gamma(c^{f}),\\
        &d_{\gamma(c)}(p_{t}(c),p_{t}(c^{f}))
        <0.
    \end{align*}
\end{enumerate}
\end{policy}

The first step in the permission assignment policy ensures once IM assigns a permission to a vehicle for some time slot, it will keep permitting the vehicle in the following time slots until the vehicle exits $A_{I}$. The second step guarantees that the vehicles closer to $A_{I}$ enjoy higher priority. This is to improve management efficiency as and reduce delay. The third step indicates IM only grants permission to a request when the requested path will not collide with any previously permitted requests as well as uncertain HVs. This is the major difference between the proposed approach in this paper and previous studies, i.e., uncertainty of HVs is considered in management based on HV dynamics and physical constraints. In the fourth step, IM will plan to permit HVs if they are followed by some permitted AVs or their paths are collision-free with all permitted paths. We will discuss how to take use of this in IM-ISL coordination protocol.

After the permission assignment policy determines $\mathcal{P}_{t}^{hv}$, it needs to notify the relevant HVs through traffic signal lights. Therefore, an approach to determine the color of signal lights based on the outputs of permissions assignment is necessary. This is handled by IM-ISL coordination protocol. After the planned permission to HVs are implicitly declared through signal lights, more uncertainty will emerge since HVs that are not planned to permit by the IM may misunderstand the green lights as a signal of permission. This uncertainty will be handled in the next-round of permission assignment for system safety. The entire interaction process is presented in Fig. \ref{fig:decision_flow}. The IM-ISL Coordination Protocol is presented below.

\begin{protocol}
\textbf{IM-ISL Coordination Protocol}

\noindent Given $\mathcal{P}^{hv}_{t}$ and $\mathcal{C}^{un}_{t}$, signal light colors are determined by the following protocol, for any $\gamma \in \Gamma$:

\begin{enumerate}
    \item $\ell_{\gamma}(t)$ cyclically changes to $g$, $a$, and $r$  as $t$ increases.
    
    \item $\ell_{\gamma}(t)=g$ if there exists a vehicle $c$ such that $\gamma(c)=\gamma$ and $c\in P_{t}^{hv}$.
    
    \item $\ell_{\gamma}(t)$ changes to $a$ if there exists a vehicle $c$ with $\gamma(c)=\gamma$ and $c\in C_{t}^{un}\setminus P_{t}^{hv}$.

    \item $\ell_{\gamma}(t)=r$ if for any HV $c$ that has $\gamma(c)=\gamma$, there exists $c\notin C_{t}^{un} \cup P_{t}^{hv}$.
\end{enumerate}
\end{protocol}

For an HV $c$, represent the relationship that $c$ maintains a safe separation distance from the nearest vehicle at time $t_{0}$ by $c\mathcal{S}_{t_{0}}c^{l}$, i.e., $d_{\gamma(c)}(p_{t_{0}}(c), p_{t_{0}}(c^{l})) \geq s_{hv}(v_{t_{0}}(c),v_{t_{0}}(c^{l}))$.

We now outline the results that provide the safety guarantees.

\begin{lemma}
Suppose at initial time $t_{0}$, there are no HVs inside $A_{H}$. For each HV $c$ about to enter $A_{H}$ and its nearest lead vehicle $c^{l}$, one has $c\mathcal{S}_{t_{0}}c^{l}$. If all vehicles are HVs, the intersection signal lights are operated by Policy 1, and HVs move under Rule 1, then all HVs are safe at any time after $t_{0}$, i.e., $c\mathcal{S}_{t}c^{l}$ holds for every HV $c$ and any $t>t_{0}$.
\end{lemma}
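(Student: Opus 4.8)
The plan is to prove the invariant $c\mathcal{S}_{t}c^{l}$ by induction on the (discrete) sequence of time instants at which the identity of some HV's nearest lead vehicle changes. Between two consecutive such \emph{events}, the set of (follower, nearest-lead) pairs is fixed, so it suffices to (i) show the invariant is preserved throughout an epoch in which every such pair is constant, and (ii) show that whenever the nearest lead vehicle of some $c$ changes, the invariant is re-established for the new pair. The base case is exactly the hypothesis: at $t_{0}$ there are no HVs inside $A_{H}$, and every HV entering $A_{H}$ already satisfies $c\mathcal{S}_{t_{0}}c^{l}$ with its nearest lead vehicle. Policy~1 enters here to justify the reduction to single-path following: since conflicting and merging paths are never simultaneously green or amber, vehicles from distinct approach paths are temporally separated at every collision/merge area, so the only vehicle $c$ can rear-end is its nearest lead on $\gamma(c)$, and that relation can change only through an exit or a merge.

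First I would make precise the physical meaning of $s_{hv}$, which drives the whole argument. I would show that $d_{\gamma(c)}(p_{t}(c),p_{t}(c^{l})) \geq s_{hv}(v_{t}(c),v_{t}(c^{l}))$ is exactly the condition that, in the worst case where $c^{l}$ decelerates at the maximal rate $a^{hv}_{min}$ until it halts while $c$ travels at up to $v_{max}$ for its full reaction time $T^{hv}_{r}$ and only then decelerates at $a^{hv}_{min}$, vehicle $c$ comes to rest at least $s_{min}$ behind the stopping point of $c^{l}$. This is a direct computation: the term $\frac{(v_{t}(c))^{2}-(v_{t}(c^{l}))^{2}}{-2a^{hv}_{min}}$ is the difference of the two braking distances, $v_{max}T^{hv}_{r}$ upper-bounds the distance $c$ covers before reacting, and $s_{min}$ absorbs vehicle length, while $\bm{1}(v_{t}(c)>v_{t}(c^{l}))$ records that no extra braking margin is needed when the follower is already slower. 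This certificate both shows that maintaining the separation prevents rear-end collision and characterizes a controlled-invariant set: as long as $c^{l}$ never brakes harder than $a^{hv}_{min}$ (a physical constraint) and $c$ may itself brake down to $a^{hv}_{min}$, vehicle $c$ always has an admissible control preserving the certificate, hence the constraint, throughout the epoch.

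Next I would treat the two ways the nearest lead vehicle can change. When the current lead $c^{l}$ departs $\gamma(c)$ (it exits $A_{I}$, so the new nearest lead $c^{l'}$ lies further ahead), I would invoke a composition (triangle) property of $s_{hv}$: because the quadratic braking-distance terms telescope and each separation carries its own $v_{max}T^{hv}_{r}+s_{min}$ slack, one has $s_{hv}(v_{t}(c),v_{t}(c^{l}))+s_{hv}(v_{t}(c^{l}),v_{t}(c^{l'}))\geq s_{hv}(v_{t}(c),v_{t}(c^{l'}))$; combined with path-distance additivity $d(c,c^{l})+d(c^{l},c^{l'})=d(c,c^{l'})$, this transfers the invariant to the new pair. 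The second way is a vehicle appearing ahead of $c$ on $\gamma(c)$; since new HVs enter only at the far boundary of $A_{H}$ (behind existing traffic) and distinct approach paths merge only inside $A_{I}$, I would use Policy~1 — under which an amber phase is held until every HV that cannot stop has cleared — to argue that no vehicle is ever inserted between $c$ and $c^{l}$ at a distance below the required separation, so merges reduce to the exit case above and the collision-area exclusion is maintained.

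The step I expect to be the main obstacle is the within-epoch invariance in the presence of the reaction delay $T^{hv}_{r}$: when $c^{l}$ begins braking, $c$ cannot respond for up to $T^{hv}_{r}$, during which the physical gap shrinks while the instantaneous threshold $s_{hv}(v_{t}(c),v_{t}(c^{l}))$ grows as $v_{t}(c^{l})$ falls. The crux is to verify that the $v_{max}T^{hv}_{r}$ buffer is sized precisely so that the stopping certificate — and therefore $c\mathcal{S}_{t}c^{l}$ — survives the \emph{entire} reaction window, not merely the instant braking starts; this is where the continuous-time dynamics must be tied to the closed-form $s_{hv}$ most carefully. A secondary difficulty is establishing the composition inequality uniformly over all velocity orderings, including the cases where one or both indicators vanish, which I would dispatch by checking the finitely many sign cases separately.
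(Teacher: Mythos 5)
Your overall strategy---treating $s_{hv}$ as a worst-case stopping certificate that defines a controlled-invariant set, inducting over the events at which a vehicle's nearest lead changes, and using a composition inequality for $s_{hv}$ together with path-distance additivity to re-establish the invariant after a lead exits---is a sound and considerably more rigorous elaboration of what the paper actually does. The paper's own proof is essentially your first two steps compressed into two sentences ("Rule 1.1 ensures the separation at any $t$" and "it is easy to verify that $s_{hv}(\cdot,\cdot)$ is enough to stop before hitting the lead if the lead brakes with $a^{hv}_{min}$"), plus an appeal to Policy~1.1 for conflicting signalized paths. Your explicit handling of the lead-change events and of the reaction-delay window fills in details the paper omits.

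There is, however, one genuine gap: you reduce all cross-path and merge conflicts to Policy~1 ("conflicting and merging paths are never simultaneously green or amber, so vehicles from distinct approach paths are temporally separated at every collision/merge area"). That statement is false for the right-turn paths $\Gamma_{r}=\{\gamma_{1},\gamma_{5},\gamma_{9},\gamma_{13}\}$, which the paper explicitly excludes from signal management---a right-turning HV may enter $A_{I}$ while the single path it conflicts with is green. The paper closes this case not with Policy~1 but with Rule~1.3 (the right-turner enters only if the nearest oncoming vehicle $c^{r}$ on the conflicting path satisfies $d_{\gamma}(p_{t}(c^{r}),p_{\gamma})\geq s_{hv}(v_{t}(c^{r})-a^{hv}_{min}h,0)$) together with the standing assumption $d_{H}\geq s_{hv}(v_{max},0)$, which guarantees the right-turner can instead stop before $A_{I}$ whenever that gap condition fails. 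Your proof never invokes Rule~1.3, so the merge/insertion case for unsignalized right turns---precisely the case where a vehicle can appear between $c$ and its lead at a sub-threshold separation, or where $c$ itself enters a collision area occupied by cross traffic---is not covered. Adding a third branch to your event analysis that handles $\gamma(c)\in\Gamma_{r}$ via Rule~1.3 would close the argument; the rest of the proposal stands.
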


\begin{proof}
As all vehicles are HVs, Rule 1.1 ensures $d_{\gamma(c)}(p_{t}(c), p_{t}(c^{l})) \geq s_{hv}(v_{t}(c),v_{t}(c^{l}))$ at any $t>t_{0}$. It is easy to verify that $s_{hv}(\cdot,\cdot)$ is enough for the following vehicle to stop before hitting the lead vehicle if the lead vehicle brakes with $a^{hv}_{min}$. If $\gamma(c) \in \Gamma_{r}$, since $d_{H}>s_{hv}(v_{max},0)$, $c$ is able to stop before entering $A_{I}$ safely. Rule 1.3 guarantees that $c^{r}\mathcal{S}_{t}c$ and $c\mathcal{S}_{t}c^{l}$. If $\gamma(c) \in \Gamma\setminus\Gamma_{r}$, Policy 1.1 guarantees that at any time, there will only be collision-free paths that are permitted. Therefore, under the constraints of the lemma, we have $c\mathcal{S}_{t}c^{l}$ holds for any HV $c$ and any $t>t_{0}$.
\end{proof}

For an AV $c$, represent the relationship that $c$ maintains a safe separation distance from its reference object at time $t_{0}$ by $c\mathcal{S}_{t_{0}}c^{ro}$, i.e., $d_{\gamma(c)}(p_{t_{0}}(c), p_{t_{0}}^{ro}) \geq s^{*}_{*}(v_{t_{0}-h},v_{t_{0}-h}^{ro})$.
\begin{lemma}
Suppose at initial time $t_{0}$, there are no AVs within $A_{C}$. For each AV $c$ about to enter $A_{C}$ and its reference object $c^{ro}$, one has $c\mathcal{S}_{t_{0}}c^{ro}$. The AVs are safe at every time $t>t_{0}$, if AVs adjust their MPC controller through Protocol 1 and permission assignment is collision-free, i.e., for any $c$, if $c$ receives permission, $\gamma(c)\cap \gamma(c')=\emptyset$ for each $c'$ that also receives permission.
\end{lemma}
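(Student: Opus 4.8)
The plan is to parallel the proof of Lemma 1, but with one essential change: because an AV can only revise its control at slot boundaries, I cannot argue that a rule is obeyed continuously, and must instead propagate the safe-separation relation $c\mathcal{S}_{t}c^{ro}$ inductively over the epoch times $t_{0}, t_{0}+h, t_{0}+2h,\dots$ by establishing recursive feasibility of the MPC. Concretely, I would show (i) that satisfying the MPC separation constraint implies no collision, and (ii) that the MPC stays feasible at every epoch, so that a separation-respecting control always exists and is the one implemented on $[t,t+h)$.

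First I would verify that each of the four separation functions $s^{*}_{*}(\cdot,\cdot)$ in (4)--(7) is \emph{stopping-safe}: starting a slot with gap at least $s^{*}_{*}(v_{t-h},v^{ro}_{t-h})$, the following AV can decelerate at its admissible rate $a^{*}_{min}$ (the subscript: $a^{av}_{min}$ if followed by an AV, $a^{hv}_{min}$ if a VHV followed by an HV) and stop without closing the gap, even when the reference object brakes at the maximal rate consistent with its type (the superscript: $a^{av}_{min}$ for an AV lead, and $a^{hv}_{min}$ with the $a^{hv}_{min}h/2$ velocity correction for an HV lead sampled only at $t-h$). The $v_{max}T^{hv}_{r}$ terms in the $s^{*}_{hv}$ cases reserve precisely the room a trailing HV needs under Rule 1.1, keeping the VHV behavior consistent with what the following HV assumes. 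This is a direct kinematic check for each formula; I consider it routine but indispensable, since it is what turns ``the constraint holds'' into ``no collision occurs.''

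Next I would run the induction. For the base case I use the hypothesis $c\mathcal{S}_{t_{0}}c^{ro}$ at entry and exhibit a feasible first-epoch solution: taking maximal braking $a^{*}_{min}h$ at every horizon step is a witness, since a safe initial gap together with worst-case reference motion keeps each separation constraint satisfied while the velocity and $\theta$ bounds hold trivially. For the inductive step I would prove recursive feasibility: given a feasible solution at epoch $t-h$ of which only the first slot is implemented, I construct a candidate at epoch $t$ by dropping the first control, shifting the remaining $N-1$ controls forward, and appending one terminal maximal-braking step. The heart of the argument is that the MPC's reference model, which advances $p^{ro}$ assuming deceleration at exactly $a^{*}_{min}$ per slot, is a genuine worst case for the actual reference object: a real AV lead cannot decelerate faster than $a^{av}_{min}$ and a real HV lead cannot decelerate faster than $a^{hv}_{min}$, matching the superscript, so the true reference position at $t$ is never nearer than the position already planned against. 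Hence the shifted trajectory still satisfies every separation constraint, and the appended braking step satisfies the terminal one by stopping-safety.

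The main obstacle I anticipate is exactly this worst-case domination when the reference object is an HV, and when the reference object changes identity between epochs. For an HV lead, the velocity is sampled only at $t-h$ while it may vary continuously over $[t-h,t)$, so I must check that the estimate $v^{ro}_{t-h}+a^{hv}_{min}h/2$ used in (5)--(6) captures the nearest position and lowest velocity the lead can reach by $t$ under maximal braking, which is exactly where the $a^{hv}_{min}h/2$ correction earns its place. When the nearest lead switches (a closer vehicle appears, or the request is denied and the reference becomes the stationary cross-point $p_{\gamma}$ via Protocol 1.2), I would argue that the new reference is at least as far and no faster than the template permits, so feasibility is preserved; the stationary-$p_{\gamma}$ case is the easy sub-case, a fixed point being dominated by any braking profile. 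Having secured $c\mathcal{S}_{t}c^{ro}$ for every AV along its own path at all $t>t_{0}$, I would finally invoke the collision-free-permission hypothesis, exactly as Lemma 1 used Policy 1.1, to exclude cross-path collisions inside $A_{I}$ between any two permitted vehicles, completing the claim that the AVs are safe for all $t>t_{0}$.
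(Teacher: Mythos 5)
Your proposal takes essentially the same route as the paper: the paper's proof of this lemma is precisely the observation that, since the reference object cannot decelerate faster than $a^{*}_{min}$ (i.e., $v^{ro}_{t_{0}}\geq\max\{0,v^{ro}_{t_{0}-h}+a^{*}_{min}h\}$), maximal braking remains a feasible solution of MPC problem (3) at every epoch, so the initial relation $c\mathcal{S}_{t_{0}}c^{ro}$ propagates inductively to $c\mathcal{S}_{t}c^{ro}$ for all $t>t_{0}$, with the collision-free-permission hypothesis handling cross-path conflicts. Your version is considerably more detailed -- the paper omits the stopping-safety verification of the separation functions (4)--(7), the shift-and-append recursive-feasibility witness, and the reference-object-switching cases entirely -- but each of those steps fills in the paper's terse argument rather than replacing it.
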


\begin{proof}
This lemma considers a mixed traffic scenario. First, the permission assignment is assumed to be collision-free. Under that assumption, Protocol 1. handles the mixed traffic scenarios in Fig. \ref{fig:mixed_traffic_scenarios}. Suppose the initial condition $d_{\gamma(c)}(p_{t_{0}}(c), p_{t_{0}}^{ro}) \geq s^{*}_{*}(v_{t_{0}-h},v_{t_{0}-h}^{ro})$ is satisfied. Since $v^{ro}_{t_{0}}>max\{0,v^{ro}_{t_{0}-h}+a^{*}_{min}h\}$, there always exists $d_{\gamma(c)}(p_{t_{0}+h}(c), p_{t_{0}+h}^{ro}) \geq s^{*}_{*}(v_{t_{0}-h}+a^{*}_{min}h,v_{t_{0}}^{ro})$. That means that once the initial conditions are satisfied, deceleration with $a^{*}_{min}$ is always a feasible solution for MPC problem (3) for any AV. Therefore, $c\mathcal{S}_{t_{0}}c^{ro}$ indicates $c\mathcal{S}_{t}c^{ro}$ for any $t>t_{0}$.
\end{proof}

\begin{lemma}
Suppose we follow Policy 2 to assign permissions for requests, $\mathcal{P}^{hv}_{t}\cup\mathcal{P}^{av}_{t}\cup \mathcal{C}^{un}_{t}$ are collision-free, i.e., for any $c_{i}$, $c_{j}$ in $\mathcal{P}^{hv}_{t}\cup\mathcal{P}^{av}_{t}\cup \mathcal{C}^{un}_{t}$, one has $\gamma_{c_{i}}\cap \gamma_{c_{j}}=\emptyset$.
\end{lemma}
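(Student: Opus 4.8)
The plan is to prove the statement by induction on the discrete time index, exploiting the fact that collision-freeness is a property of the \emph{set of paths} represented in a vehicle set rather than of the individual vehicles: two vehicles $c_i,c_j$ conflict precisely when $\gamma(c_i)\cap\gamma(c_j)\neq\emptyset$, so a vehicle set is collision-free iff the corresponding path set is pairwise collision-free, and adding a vehicle whose path already appears in a collision-free set keeps it collision-free. The invariant I would carry is exactly the conclusion, that $\mathcal{P}^{hv}_{t}\cup\mathcal{P}^{av}_{t}\cup\mathcal{C}^{un}_{t}$ is collision-free for every $t$. The base case is the initial time, at which the relevant sets are empty and the claim is vacuous.

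For the inductive step, I would assume $\mathcal{P}^{hv}_{t-h}\cup\mathcal{P}^{av}_{t-h}\cup\mathcal{C}^{un}_{t-h}$ is collision-free and trace the four steps of Policy 2. Step 1 sets $\mathcal{P}^{av}_{t}=\mathcal{P}^{av}_{t-h}\setminus\mathcal{C}^{e}_{t}$ and $\mathcal{P}^{hv}_{t}=\mathcal{P}^{hv}_{t-h}\setminus\mathcal{C}^{e}_{t}$; since these are subsets of the previous permitted sets, their path sets are subsets of the previously permitted path set and hence remain collision-free. The delicate point is to incorporate $\mathcal{C}^{un}_{t}$, because this set is \emph{not} simply $\mathcal{C}^{un}_{t-h}$ carried over. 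Here I would invoke the IM--ISL Coordination Protocol: by definition $c\in\mathcal{C}^{un}_{t}$ requires $\ell_{\gamma(c)}(t-h)=g$, and Step 2 of the coordination protocol makes $\ell_{\gamma}(t-h)=g$ only when some permitted HV lies on $\gamma$, i.e. $\gamma$ is the path of a vehicle in $\mathcal{P}^{hv}_{t-h}$. Consequently every path appearing in $\mathcal{C}^{un}_{t}$ already appears in $\mathcal{P}^{hv}_{t-h}$, so the path set of $(\mathcal{P}^{av}_{t-h}\setminus\mathcal{C}^{e}_{t})\cup(\mathcal{P}^{hv}_{t-h}\setminus\mathcal{C}^{e}_{t})\cup\mathcal{C}^{un}_{t}$ is contained in the path set of $\mathcal{P}^{av}_{t-h}\cup\mathcal{P}^{hv}_{t-h}$, which is collision-free by the inductive hypothesis.

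It then remains to check that Steps 3 and 4 preserve collision-freeness as vehicles are inserted one at a time. Step 3 admits an AV $c$ only when $\gamma(c)\cap\gamma(c')=\emptyset$ for every $c'$ currently in $\mathcal{P}^{av}_{t}\cup\mathcal{P}^{hv}_{t}\cup\mathcal{C}^{un}_{t}$, which is precisely the guard keeping the running union collision-free after each insertion. Step 4 admits an HV $c$ under two alternatives: case (\emph{a}) is again the explicit collision-free guard and is immediate; case (\emph{b}) admits $c$ only when there is $c^{f}\in\mathcal{P}^{av}_{t}\cup\mathcal{C}^{un}_{t}$ with $\gamma(c)=\gamma(c^{f})$, so $c$ contributes no new path to the union and, by the path-level observation above, cannot create a conflict. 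Chaining these insertions, the union stays collision-free throughout, completing the inductive step.

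The main obstacle is the second paragraph, namely correctly linking $\mathcal{C}^{un}_{t}$ back to the previous round. Uncertain HVs include the ``unplanned but permitted'' vehicles that were never in $\mathcal{P}^{hv}_{t-h}$ yet still observed green, so one cannot argue $\mathcal{C}^{un}_{t}\subseteq\mathcal{C}^{un}_{t-h}$ or $\mathcal{C}^{un}_{t}\subseteq\mathcal{P}^{hv}_{t-h}$ at the vehicle level; the argument must be made purely at the path level, and it relies essentially on the coordination protocol's rule that a green light on $\gamma$ is produced only by a permitted HV on $\gamma$. Care is also needed to confirm that the membership test $\ell_{\gamma(c)}(t-h)=g$ in the definition of $\mathcal{C}^{un}_{t}$ is indeed governed by that green-producing rule, so that every uncertain HV's path is genuinely witnessed by a member of the previous collision-free union.
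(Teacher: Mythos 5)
Your proof is correct, but it is far more of a proof than the paper gives: the paper's entire argument is one sentence observing that Steps 3 and 4 of Policy 2 check a new path against already-permitted paths before admitting it. That sentence covers only the explicit guards in Step 3 and Step 4(\emph{a}); it says nothing about why the carried-over sets from Step 1 remain mutually collision-free, why Step 4(\emph{b}) (which admits an HV with no conflict check at all, only a same-path witness) is harmless, or --- most importantly --- why $\mathcal{C}^{un}_{t}$ is compatible with $\mathcal{P}^{av}_{t}\cup\mathcal{P}^{hv}_{t}$ in the first place, given that $\mathcal{C}^{un}_{t}$ is defined by signal-light colors rather than by any check inside Policy 2. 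Your induction on $t$ with the path-level invariant supplies exactly these missing pieces, and your key observation is the right one: the lemma as stated (assuming only Policy 2) cannot be closed without importing the IM--ISL coordination, since it is Protocol 2 that ties a green light on $\gamma$ at $t-h$ back to a vehicle of the previous collision-free union on $\gamma$, and hence guarantees every path represented in $\mathcal{C}^{un}_{t}$ was already represented at $t-h$. Two small refinements: (i) Protocol 2.2 is literally an ``if,'' not an ``only if,'' so it is safer to argue via the contrapositive of Protocol 2.4 --- if $\ell_{\gamma}(t-h)\neq r$ then some HV on $\gamma$ lies in $\mathcal{C}^{un}_{t-h}\cup\mathcal{P}^{hv}_{t-h}$ --- which lands you in the previous union either way and is exactly why your invariant must include $\mathcal{C}^{un}_{t-h}$ and not just the permitted sets; (ii) your reading of ``collision-free'' as a property of the set of \emph{distinct} paths (so that a same-path vehicle adds nothing) is the only reading under which the lemma and Step 4(\emph{b}) are consistent, and is worth stating explicitly since the paper's notation $\gamma_{c_i}\cap\gamma_{c_j}=\emptyset$ taken literally would fail for two vehicles sharing a path.
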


\begin{proof}
As in Step 3) and Step 4) of Policy 2, before providing permission to a new path, the IM will always check whether the path conflicts with any already-permitted paths.
\end{proof}

\begin{lemma}
Suppose we follow Protocol 2 to determine the colors of the intersection signal lights. Then the way that signal lights are operated satisfies all constraints in Policy 2.
\end{lemma}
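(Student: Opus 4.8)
The plan is to check, one at a time, that the color assignment produced by Protocol 2 meets each of the two constraints of the Signal Operation Policy. For the first constraint the only ingredient is the collision-freeness guaranteed by Lemma 3; for the second the work is to combine the monotonicity of $s_{hv}(\cdot,0)$ with a short induction over the amber phase and the persistence built into the Permission Assignment Policy.

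For constraint (1), suppose $\ell_{\gamma_i}(t)\in\{g,a\}$. Steps 2) and 3) of Protocol 2 then force a vehicle $c$ with $\gamma(c)=\gamma_i$ and $c\in\mathcal{P}^{hv}_t\cup\mathcal{C}^{un}_t$. By Lemma 3 the set $\mathcal{P}^{hv}_t\cup\mathcal{P}^{av}_t\cup\mathcal{C}^{un}_t$ is collision-free, so for every $\gamma_j\in\Gamma_i$ (i.e.\ $\gamma_i\cap\gamma_j\neq\emptyset$) no vehicle on $\gamma_j$ can lie in $\mathcal{P}^{hv}_t\cup\mathcal{C}^{un}_t$; otherwise two conflicting paths would each carry a member of a collision-free set. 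Hence every HV on $\gamma_j$ satisfies $c\notin\mathcal{C}^{un}_t\cup\mathcal{P}^{hv}_t$, and step 4) of Protocol 2 sets $\ell_{\gamma_j}(t)=r$, which is exactly constraint (1).

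For constraint (2), fix $\gamma$ with $\ell_\gamma(t_0)=a$ and an HV $c$ on $\gamma$ that at some $t>t_0$ cannot stop, $d_\gamma(p_t(c),p_\gamma)<s_{hv}(v_t(c),0)$; the goal is $\ell_\gamma(t)\neq r$, which by step 4) of Protocol 2 reduces to producing one HV on $\gamma$ in $\mathcal{C}^{un}_t\cup\mathcal{P}^{hv}_t$. The distance half is immediate: since $v_t(c)\le\max\{v_{t-h}(c)+a_{max}h,\,v_{max}\}$ and $s_{hv}(\cdot,0)$ is nondecreasing in its first argument, the ``cannot stop'' inequality yields the distance inequality appearing in the definition of $\mathcal{C}^{un}_t$. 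It remains to certify the history clause $\ell_\gamma(t-h)=g$ (which would place $c$ in $\mathcal{C}^{un}_t$) or else that $c$ has already been carried into $\mathcal{P}^{hv}_t$. I would do this by induction over the slots of the amber phase: at the first amber slot the committed HV saw green one slot earlier, hence lies in $\mathcal{C}^{un}$; step 1) of the Permission Assignment Policy then retains every HV that step 4) admitted at an earlier slot in $\mathcal{P}^{hv}_t$ until it exits $A_I$, so at each later slot some committed HV still occupies $\mathcal{C}^{un}_t\cup\mathcal{P}^{hv}_t$ and the light cannot be red while that HV cannot stop.

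The main obstacle is exactly this second constraint, and inside it the bookkeeping across slots. Because $\mathcal{C}^{un}_t$ only flags HVs that saw green at $t-h$, a committed HV drops out of $\mathcal{C}^{un}$ on the slot after the light has actually turned amber; the crux is therefore to show that such an HV has by then been absorbed into $\mathcal{P}^{hv}_t$ and is kept there by step 1) of the Permission Assignment Policy, so that neither the green-at-$t-h$ clause nor a later re-evaluation can drive the light to red before the HV clears $A_I$. Constraint (1), the distance monotonicity, and the cyclic ordering $g\to a\to r$ supplied by step 1) of Protocol 2 are all routine by comparison.
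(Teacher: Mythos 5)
Your treatment of constraint (1) is essentially the paper's argument: the paper also observes that every green or amber path carries a vehicle of $\mathcal{P}^{hv}_{t}\cup\mathcal{C}^{un}_{t}\subset\mathcal{P}^{hv}_{t}\cup\mathcal{P}^{av}_{t}\cup\mathcal{C}^{un}_{t}$ and invokes Lemma 3's collision-freeness, so that part is fine. The divergence is in constraint (2), where the paper disposes of the matter in one sentence --- the worst-case velocity estimate in $\mathcal{C}^{un}_{t}$ makes Protocol 2.3--2.4 ``tighter'' than the policy's stopping condition --- whereas you correctly notice that this is not obviously enough, because membership in $\mathcal{C}^{un}_{t}$ requires $\ell_{\gamma(c)}(t-h)=g$, so a committed HV falls out of $\mathcal{C}^{un}$ one slot after the light turns amber even though it still cannot stop before $A_{I}$.

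The problem is that the escape route you propose for this, which you yourself identify as the crux, is not actually available from the stated policies, so your induction step is unproven. Step 4 of the Permission Assignment Policy admits an HV $c\in\mathcal{C}^{un}_{t}$ into $\mathcal{P}^{hv}_{t}$ only when condition (\emph{a}) or (\emph{b}) holds; there is no clause that unconditionally absorbs a cannot-stop HV. Indeed Protocol 2.3 turns the light amber precisely because some $c\in\mathcal{C}^{un}_{t}\setminus\mathcal{P}^{hv}_{t}$ exists, i.e.\ the very HV that forces the amber phase is one that was \emph{not} admitted to $\mathcal{P}^{hv}_{t}$. At the next slot that HV satisfies $\ell_{\gamma(c)}(t)=a\neq g$, hence leaves $\mathcal{C}^{un}_{t+h}$, and step 1 of the Permission Assignment Policy only preserves vehicles already in $\mathcal{P}^{hv}_{t}$, which this one never entered. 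If no other HV on $\gamma$ certifies membership, Protocol 2.4 sets the light to red one slot after amber, violating Policy 1's constraint (2). Closing this gap requires either an additional persistence mechanism (e.g.\ redefining $\mathcal{C}^{un}_{t}$ with $\ell_{\gamma(c)}(t-h)\in\{g,a\}$, or forcing step 4 to admit every cannot-stop HV) or an argument that condition (\emph{a}) or (\emph{b}) always holds for such an HV; you supply neither, and --- for what it is worth --- neither does the paper, whose own proof simply does not engage with the multi-slot amber issue you uncovered.
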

 
\begin{proof}
Protocol 2.1 also lets $\ell_{\gamma}(t)$ cyclically change to $g$, $a$, and $r$ as $t$ increases. Meanwhile, from Protocol 2.2-2.3, paths whose signal lights are green or amber are all from $\mathcal{P}^{hv}_{t}\cup \mathcal{C}^{un}_{t}\subset \mathcal{P}^{hv}_{t}\cup \mathcal{P}^{av}_{t}\cup \mathcal{C}^{un}_{t}$, which are collision-free as noted in Lemma 3. That means they satisfy the constraint of Policy 2.1. In addition, from Protocol 2.3, $\mathcal{C}^{un}_{t}\setminus\mathcal{P}^{hv}_{t}$ contains those HVs that may fail to stop safely before $A_{I}$ based on worst estimation. In fact, it is a tighter constraints than Policy 2.2. Thus the way that signal lights are operated satisfies all constraints of Policy 2.
\end{proof}

\vspace{10mm}

\begin{theorem}
Suppose at initial time $t_{0}$, there are no vehicles inside $A_{C}$. Then for each HV $c$ about to enter $A_{H}$ and the its nearest lead vehicle $c^{l}$, one has $c\mathcal{S}_{t_{0}}c^{l}$. For each AV $c'$ about to enter $A_{C}$ and its reference object $c^{ro}$, there exists $c'\mathcal{S}_{t_{0}}c^{ro}$. If (1) all HVs move under Rules 1, (2) AVs adjust their MPC controller through Protocol 1, (3) the permission assignments follow Policy 2, and (4) the colors of traffic signal lights are determined through Protocol 2, then all HVs and AVs are safe at every time after $t_{0}$, i.e., $c\mathcal{S}_{t}c^{l}$ and $c'\mathcal{S}_{t}c^{ro}$ holds for any HV $c$ and any AV $c'$ at any $t>t_{0}$.
\end{theorem}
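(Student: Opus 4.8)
The plan is to assemble the theorem from Lemmas 1--4 by separating the two ways a collision can arise: a \emph{rear-end} collision between a vehicle and the vehicle immediately ahead of it on the same path, and a \emph{conflict-point} collision at some $\gamma_i \cap \gamma_j$ between vehicles on two different paths. The initial hypotheses of the theorem (no vehicles inside $A_C$ at $t_0$, and each HV/AV about to enter satisfying its separation relation) furnish the base case. Conflict-point safety I would establish per epoch directly from Lemmas 3 and 4, whereas rear-end safety I would carry forward by induction over the discrete epochs $t_0, t_0+h, t_0+2h, \dots$, using as the invariant that every HV satisfies $c\mathcal{S}_t c^l$ and every AV satisfies $c'\mathcal{S}_t c^{ro}$ at the start of each epoch.

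For conflict-point safety, Lemma 3 gives at every $t$ that $\mathcal{P}^{hv}_t \cup \mathcal{P}^{av}_t \cup \mathcal{C}^{un}_t$ lies on mutually collision-free paths, and Lemma 4 shows that the signal colors produced by Protocol 2 conform to the Signal Operation Policy, so that whenever a path is green or amber all of its conflicting paths are red. The key fact to record is that any vehicle physically able to be inside a collision area belongs to $\mathcal{P}^{hv}_t \cup \mathcal{P}^{av}_t \cup \mathcal{C}^{un}_t$: an AV is there because it only enters after an IM permission, while an HV that enters $A_I$ is either one the IM planned to permit (hence in $\mathcal{P}^{hv}_t$) or one that could not stop in time and is therefore captured by the over-estimate $\mathcal{C}^{un}_t$. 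Since this whole set is collision-free, no two vehicles ever occupy the same collision area. The right-turn paths $\Gamma_r$, which carry no signal light, are handled separately by Rule 1.3, whose adjusted check $s_{hv}(v_t(c^r)-a^{hv}_{min}h,0)$ budgets for a sudden speed change of the single conflicting vehicle $c^r$.

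For rear-end safety I would combine Lemmas 1 and 2 and extend them to the cross-type cases. Each follower keeps a separation sized so that it can stop before its lead even if the lead brakes at the lead's hardest achievable rate; this rate is encoded by the four variants of $s^*_*$ in equations (4)--(7) for AV followers and by $s_{hv}$ for HV followers. The subtlety is consistency across types. An HV following an AV uses $s_{hv}$, which presumes the lead can decelerate only as hard as $a^{hv}_{min}$, so the ``virtual HV'' rule of Protocol 1 --- forcing any AV followed by an HV to cap its deceleration at $a^{hv}_{min}$ --- is exactly what validates the HV's assumption. Conversely, an AV following an HV uses $s^{hv}_{av}$ or $s^{hv}_{hv}$, which already budget for the lead HV's freedom to change speed mid-slot through the estimate $v^{ro}_{t-h}+a^{hv}_{min}h/2$ and the response time $T^{hv}_r$. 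In every case the feasibility argument of Lemma 2 --- that decelerating at $a^*_{min}$ is always an admissible MPC solution --- carries the separation invariant from one epoch to the next, and the HV argument of Lemma 1 does the same under Rule 1.1.

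The hard part will be the treatment of the uncertain HVs and the closing of the induction across the coordinated permission/signal update. What must be shown is that the uncertainty set $\mathcal{C}^{un}_t$ defined above is a genuine over-estimate: every HV that might read a green light and then be unable to brake before $A_I$ once the light changes is contained in $\mathcal{C}^{un}_t$. This rests on the conservative present-velocity surrogate $\max(v_{t-h}(c)+a_{max}h, v_{max})$ and the stopping-distance bound $s_{hv}(\cdot,0)$, together with the amber-hold guarantee of the Signal Operation Policy that never drops a path to red while some HV still cannot stop. Granting this, the collision-free property of $\mathcal{P}^{hv}_t \cup \mathcal{P}^{av}_t \cup \mathcal{C}^{un}_t$ persists when the next epoch recomputes its uncertainty set and signal colors, which closes the induction and yields $c\mathcal{S}_t c^l$ and $c'\mathcal{S}_t c^{ro}$ for every HV $c$, every AV $c'$, and all $t>t_0$, with conflict-point safety holding simultaneously.
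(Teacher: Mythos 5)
Your proposal is correct and follows essentially the same route as the paper: the theorem is proved by assembling Lemmas 1--4, with Lemmas 1 and 2 covering rear-end (same-path) safety for HVs and AVs respectively, and Lemmas 3 and 4 guaranteeing that the permitted/uncertain set is collision-free and that the signal operation conforms to Policy 1. Your version is in fact more explicit than the paper's own proof --- spelling out the epoch-by-epoch induction, the claim that any vehicle able to occupy a collision area lies in $\mathcal{P}^{hv}_{t}\cup\mathcal{P}^{av}_{t}\cup\mathcal{C}^{un}_{t}$, and the cross-type consistency of the separation functions --- but these are elaborations of the same argument, not a different one.
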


\begin{proof}
As indicated by Lemma 1, if all other vehicles move like HVs, the HVs can stay safe by following Rule 1 while the signal lights operation respects the constraints from Policy 1. Meanwhile, it has been shown in Lemma 2 that AVs are able to maintain a safe separation distance from their lead vehicles through Protocol 2, and behave like HVs when they are followed by HVs. As long as the permission assignments are collision free, then AVs are safe. Lemmas 3 and 4 directly demonstrate that the permission assignments (Policy 2) for AVs is collision-free and, at the same time, the signal lights operated through Protocol 2 satisfy the constraints in Policy 1. Therefore, the whole system is safe as long as all the assumptions in the theorem are valid.  
\end{proof}





\section{Concluding Remarks}\label{sec:conclusion}

Motivated by the fact that there will be a transition period when both human-driven vehicles and autonomous vehicles coexist in intersections, we have addressed the problem of design of provably safe intersection management for mixed transportation systems. The approach we follow considers a loose model of human behavior that permits worst case behavior, and a tight models for autonomous vehicles. We design a MPC controller for AVs and simple rules for HVs. We propose an architecture for intersection management consisting of permission assignments and traffic lights operation. We also design coordination protocols between permission assignment for AVs and color change of traffic lights for HVs. In the proposed approach, we take into account the differences between AVs and HVs in control freedom, command complexity, braking response and communication capability. To be compatible with human expectations and behavior, the rules and signal lights operation for HVs are close to what is currently used in traffic. Our work can be extended to more general scenario where vehicles with heterogeneous levels of autonomy as well as resultant uncertainty are operated. For instance, vehicles that can measure states through sensors, vehicles that can communicate through V2I and V2V communication, vehicles that can be directly controlled by intersections etc.


\vspace{10mm}

\section*{Acknowledgment}

This material is based upon work partially supported by NSF under
Contract Nos. CNS-1646449, CCF-1619085 and Science \& Technology Center Grant
CCF-0939370, Office of Naval Research under Contract N00014-18-1-2048, the U.S. Army Research Office under Contract No. W911NF-15-1-0279, and NPRP grant NPRP 8-1531-2-651 from the Qatar National
Research Fund, a member of Qatar Foundation.

\bibliographystyle{IEEEtran}
\bibliography{ICDCS_distributed}
\end{document}